\newtheoremstyle{exampstyle}
{3pt} 
{2pt} 
{} 
{} 
{\bfseries} 
{.} 
{.5em} 
{} 
\theoremstyle{exampstyle}
\newtheorem{problem}{Problem}
\newtheorem{proposition}{Proposition}
\newtheorem{remark}{Remark}
\def\BibTeX{{\rm B\kern-.05em{\sc i\kern-.025em b}\kern-.08em T\kern-.1667em\lower.7ex\hbox{E}\kern-.125emX}}
\begin{document}

\title{A Deep Learning-Based Approach for Cell Outage Compensation in NOMA Networks}

\author{Elaheh Vaezpour, Layla Majzoobi, Mohammad Akbari, Saeedeh Parsaeefard, Halim Yanikomeroglu, \IEEEmembership{Fellow, IEEE}
\thanks{Elaheh Vaezpour is with Communication Department, ICT Research Institute, Tehran, Iran (e-mail: e.vaezpour@itrc.ac.ir).}
\thanks{Layla Majzoobi is Research Assistant at Communication Department, ICT Research Institute, Tehran, Iran (e-mail: l.majzoobi@itrc.ac.ir).}
\thanks{Mohammad Akbari is with Communication Department, ICT Research Institute, Tehran, Iran (e-mail:  m.akbari@itrc.ac.ir).}
\thanks{Saeedeh Parsaeefard is with the Department of Electrical and Computer Engineering, University of Toronto, Toronto, Canada (email: saeideh.fard@utoronto.ca).}
\thanks{Halim Yanikomeroglu is with the Department of Systems and Computer Engineering, Carleton University, Ottawa, Canada (email: halim@sce.carleton.ca)
}}

\IEEEtitleabstractindextext{\begin{abstract}Cell outage compensation enables a network to react to a catastrophic cell failure quickly and serve users in the outage zone uninterruptedly. Utilizing the promising benefits of non-orthogonal multiple access (NOMA) for improving the throughput of cell edge users, we propose a newly NOMA-based cell outage compensation scheme. In this scheme, the compensation is formulated as a mixed integer non-linear program (MINLP) where outage zone users are associated to neighboring cells and their power are allocated with the objective of maximizing spectral efficiency, subject to maintaining the quality of service for the rest of the users. Owing to the importance of immediate management of cell outage and handling the computational complexity, we develop a low-complexity suboptimal solution for this problem in which the user association scheme is determined by a newly heuristic algorithm, and power allocation is set by applying an innovative deep neural network (DNN). The complexity of our proposed method is in the order of polynomial basis, which is much less than the exponential complexity of finding an optimal solution. Simulation results demonstrate that the proposed method approaches the optimal solution. Moreover, the developed scheme greatly improves fairness and increases the number of served users.

\end{abstract}

\begin{IEEEkeywords}
Cell Outage Compensation, Deep Neural Network (DNN), Nonorthogonal Multiple Access (NOMA), Self-healing.
\end{IEEEkeywords}
}

\maketitle

\section{INTRODUCTION}

\IEEEPARstart{G}{rowing} demand for very high speed and low latency wireless services is making cellular networks denser and more complex. The planning and management of networks with a huge number of elements and parameters is one of the biggest challenges that cellular network operators face. Owing in part to this, inefficiencies in manual solutions has lead self-organizing networks (SONs) to become an essential part of the management of wireless cellular networks \cite{TS32500}. SON technology aims to reduce capital and operational expenditures by minimizing human intervention in a network by means of various functionalities, including self-configuration, self-optimization, and self-healing \cite{aliu2012survey}. Cell outage management is one of the most important use cases of self-healing. It applies to BSs that can no longer serve users in its zone, which leads to a coverage hole in the network \cite{TS32541}. In cell outage management, compensation refers to the parameter adjustment of the cells surrounding the outage zone, which is intended to minimize the impact of the outage on the network.

 Recently, non-orthogonal multiple access (NOMA) has been proposed in 5G networks to improve the throughput of cell-edge users \cite{islam2016power} and the spectral efficiency (SE) of networks by supporting more users than the number of available orthogonal resources \cite{saito2013non}.
 Due to NOMA's ability to boost performance for cell edge users, it can be a good candidate in cell outage compensation. In this scenario, users in the outage zone (failed users) are indeed the cell edge users of the surrounding cells taking part in the compensation process. To fully utilize the benefit of the proposed NOMA-based compensation scheme, we model the compensation process as a mixed integer non-linear program (MINLP) where there are two key problems: (1) optimally assigning failed users to neighboring cells; and (2) allocating power in order to maximize the sum of the failed users' SE while maintaining quality of service (QoS) for users in neighboring cells.
To the best of our knowledge, this is the first work that attempts to mathematically model the joint problem of failed user association and power allocation in a NOMA-based cell outage compensation scheme. This problem is computationally difficult to solve \cite{kannan1978computational}. We decompose the original problem into two sub-problems that we solve sequentially (i.e., user association and then power allocation), and we propose effective low complexity solutions for each sub-problem.
\subsection{Related Work}
Several proposals for cell outage compensation frameworks can be found in the literature. Here, we briefly review some of the most important existing works in this area.

The authors in \cite{amirijoo2011effectiveness} studied the effectiveness of different parameter adjustments, such as reference signal power, antenna tilt, and scheduling parameters in cell outage compensation. In \cite{de2017improving}, a guideline to improve compensation process is provided. In \cite{qin2018machine,lee2013cobra,lee2015low}, orthogonal frequency-division multiple access
(OFDMA)-based compensation methods were proposed where the total bandwidth of each cell was split into two parts: a part used to transmit data for normal users (normal bandwidth), and a part used to serve users in an outage zone (healing bandwidth). The proposed algorithm in \cite{lee2013cobra} utilized cooperative beamforming in the healing channel and formulated the compensation process as an optimization problem. The objective was to find an optimal subchannel assignment and power allocation scheme to maximize the weighted sum-rate of users in the network. In \cite{lee2015low}, a low-complexity resource allocation algorithm based on an optimization approach was proposed for cell outage compensation. The aim was to maximize sum-rate in the network while providing all users with a minimum rate requirement. In \cite{qin2018machine}, compensation was modeled as a utility maximization problem and a distributed algorithm was proposed to solve it. However, the authors did not consider the performance degradation for users in the neighboring cells due to the compensation nor did they consider bandwidth splitting. The use of Unmanned Aerial Vehicles (UAV) for compensation of cell outage is studied in \cite{selim2018short}. In \cite{onireti2015cell}, an actor critic (AC) based reinforcement learning (RL) cell outage compensation algorithm was proposed to tune power and antenna tilt of the neighboring cells. A deep RL-based approach for outage compensation was proposed in \cite{guo2019deep} where association of users in the outage zone to neighboring cells was done through a K-means clustering algorithm; and for compensation the antenna downtilt and user power was determined by a Deep Q-learning method. In \cite{chen2021cell} a cell outage compensation mechanism based on an improved particle swarm algorithm (IPSO) is proposed. In this work, IPSO is adopted based on adaptive weights to comprehensively adjust the downtilt, horizontal azimuth, and user power allocation of the neighboring base station to ensure the QoS requirement of the users. In \cite{jie2020comp}, a cell outage compensation mechanism using a network cooperation scheme in a heterogeneous network with densely deployed Femto Base Stations
(FBSs) is proposed. The proposed scheme utilizes the Coordinated Multi-Point (CoMP) transmission
and reception with joint processing technique where the main objective is to
guarantee service to UEs even in cases where their primary FBSs or their backhauls fail. In spite of all the above efforts, cell outage compensation in NOMA-based networks has been overlooked and needs to be studied. Therefore, we do so by considering the problem of joint power allocation and user association.

The problem of power allocation in a NOMA-based system can be solved by traditional optimization-based numerical approaches. In \cite{ali2016dynamic}, power allocation in uplink and downlink were cast as convex optimization problems, and closed-form solutions were proposed for them. An imperfect NOMA scheme suffering from receiver sensitivity and interference residue from non-ideal decoding was analyzed in \cite{celik2019distributed}, where the objective of resource allocation optimization problem captured a trade-off between maximum throughput and proportional fairness, and an iterative algorithm was developed to solve it. However, these optimization-oriented methods need to be performed repeatedly whenever network characteristics (such as a change in propagation scenarios) vary over time. Therefore, they incur considerable online computational cost and high overhead that impair their real-time development. Meta-heuristic algorithms such as genetic algorithms \cite{ma2017power} and particle swarm optimization \cite{xiao2018improved} have also been used to deal with the power allocation problem in NOMA systems. However, they have the same drawbacks as the optimization-based numerical approaches mentioned above.

Taking these limitations into account, learning-based approaches have recently been proposed. One popular such approach used to deal with the problem of unknown varying environments is RL \cite{sutton2018reinforcement}. In \cite{onireti2015cell} and \cite{guo2019deep}, RL-based methods were proposed for power allocation in an outage compensation scenario. Yet one drawback of RL-based approaches is that they try to find the optimal solution by interacting with the environment through trial and error and may require a long time to converge. This makes them less attractive in practical implementations of cell outage compensation scenarios where the network needs to react quickly to the catastrophic failure and provide users with service. Unsupervised learning algorithms do not leverage the data that can be achieved offline during the operation of a network. To utilize the data that can be obtained offline and achieve much lower online computational complexity, we propose a supervised deep neural network (DNN) for the cell outage compensation problem. DNNs have been shown to be a promising approach for approximating the optimal policy for different resource allocation problems in wireless networks, such as user association \cite{zappone2018user}, power control  \cite{liang2019towards,sun2018learning,yang2019deep}, subchannel assignment \cite{ahmed2019deep}, and beamforming \cite{alkhateeb2018deep}. Although several works have focused on DNN-based resource allocation in wireless networks, none have investigated its potential in dealing with the problem of cell outage compensation, which is addressed in this paper.
\subsection{Contributions and Paper Organization}

 The key contributions of this work can be summarized as follows:
\begin{itemize}
  \item A NOMA-based cell outage compensation scheme is proposed.
  \item The proposed scheme manages the cell outage with the least required changes in the users connections, which in turn imposes minimum signaling overhead on the network.
  \item A new formulation for joint failed user association and power allocation is presented as an optimization problem.
  \item For the NP-hard joint failed user association and power allocation problem, a newly and efficient two-step sequential approach is proposed. The computational complexity of this approach is in the order of polynomial basis. Our proposed approach follows the concept of learning to optimize where we use deep neural networks (DNN) to provide the best map for the power allocation step of our algorithm.
\end{itemize}

The rest of the paper is organized as follows: The system model and problem formulation are introduced in Section \ref{sec:SysModel} and \ref{sec:ProblemFormulation}, respectively. The proposed solution methodology is presented in Section \ref{sec:solution} and Section \ref{sec:cochannelInt}. In Section \ref{sec:MIMO}, we will explain how our proposed method can be extended to MIMO case. Section \ref{sec:simulation} demonstrates the evaluation of the proposed scheme, and conclusions are drawn in Section \ref{sec:conclusion}.
\section{System Model}\label{sec:SysModel}
Consider a two-tier cellular network architecture, where control and data planes are separated \cite{mohamed2015control} and \cite{xu2013functionality}. The control plane, constructed of macro cells, handles user connectivity and signaling procedures, including synchronization, broadcast, multicast, paging, and RRC functionalities. The data plane consists of small cells and provides the connectivity facilities with high data rates for end-users. In this context, the two layers operate in different sub-bands, which prevents co-channel interference between the data plane and control plane cells. In this paper, we focus on the management of the data plane, where a single input single output downlink NOMA-based network is considered. It is assumed that the neighboring cells transmit on different sub-bands, and therefore, the co-channel interference between them is negligible. Each base station (BS) operates in power-domain NOMA where users are divided into multiple clusters. Users in each cluster share the same time-frequency resources, which are orthogonal to the resources allocated to the other clusters. We assume that perfect channel-state information (CSI) is available at the transmitter.

Power domain NOMA allows assigning one transmission channel to multiple user in a cluster, and exploits the channel gain difference between them for the purpose of multiplexing \cite{maraqa2020survey}. It superposes users signal in the power domain at the transmitter. The superposition is performed such that each NOMA receiver can successfully decode the desired signal by applying a particular successive interference cancellation (SIC) technique at the corresponding receiver \cite{islam2016power}. A base station transmits the superposed signal to users with different powers allocated to each user.
For simplicity of explanation, let us consider a two-user downlink scenario. The base station transmit a signal $x_i$ to user $i$, where $i=1,2$ with power $p_i$. The transmitted signal is
\begin{equation*}
x = \sqrt{P_1}x_1 + \sqrt{P_2}x_2.
\end{equation*}
The received signal at user $i$ is
\begin{equation*}
y_i = h_i x + w_i,
\end{equation*}
where $h_i$ is the channel coefficient between the base station and user $i$. $w_i$ is the Gaussian noise, and its power density is $N_i$. Prior to decoding the desired signal, the receiver cancels the strong interference signals of the users with lower channel gains than the considered receiver. The signals for users with higher channel gains is considered as inter-user interferences.
\\
Hence, the corresponding SINR depends on the power allocated to a user and sum of the powers allocated to those having higher channel gains. Assuming that $h_1>h_2$, the SINR in the two-user case would be
\begin{equation*}
\sigma_1 = \frac{P_1 h_1}{N_1}, \quad \quad \sigma_2 = \frac{P_2 h_2}{N_2 + P_1 h_2}.
\end{equation*}
The power allocation mechanism in NOMA is performed in such way that higher transmission power is allocated to users with lower channel gain and vice versa. Therefore, the strongest interference that a user senses is due to the high powers allocated to users with weaker channel gains which is removed by using a proper SIC technique. To successfully decode the desired signal, a minimum difference is required between a user's signal and non-decoded inter-user interference.

We consider a cell outage scenario in which one of the BSs experiences outage during normal network operations and can no longer serve the users in its coverage area. Our objective is to compensate for the failure of this BS via serving its users by some or all of its neighboring cells. The neighboring cells participating in the compensation process are called compensating cells. We assume that users in the outage zone can receive the pilot of compensating cells \cite{lee2013cobra} meaning that they can estimate compensating cells channel gains and send the measurements to the OAM unit which manages outage compensation procedure. At the time of failure, the outage BS was serving a set $\mathcal{U}^f$ of $U^f$ users (failed users), and cell $n$ of compensating set $\mathcal{N}$ was serving a set $\mathcal{U}^n$ of $U^n$ users (connected users). The BS $n \in \mathcal{N}$ groups its $U^n$ users into a set $\mathcal{L}^n$ of  $L^n$ NOMA clusters and serves them using a set of orthogonal subcarriers. It is assumed that all active users in the network are provided with the same portion of available bandwidth. Figure \ref{fig:SystemModel} demonstrates the network model. In this figure, the outage BS and its users are depicted in red. As we can see, this BS has three  neighbors each of which serves a number of connected users in a NOMA-based system. We assume that the cell outage compensation process is managed by a central unit called "network
operation and management" (OAM).

  When a BS experiences outage, the network enters the compensation state. In this state, the objective is to serve users in the coverage zone of the outage BS by its compensating cells while minimizing the impact on the normal operation of connected users. To achieve objective, we aim to efficiently assign failed users to clusters of compensating cells and allocate power to all users while serving connected users without interruption and preserving their minimum QoS requirements. The compensation process can be cast as an optimization problem, which is described in the next section.

\begin{remark}\label{rem:objective}
In order to minimize the influence of the compensation process on the  network, we assume that connected user association schemes are the same before and after the cell outage.
\end{remark}

\begin{figure}[!t]
	\centering
	\centerline{\includegraphics[width=0.90\linewidth]{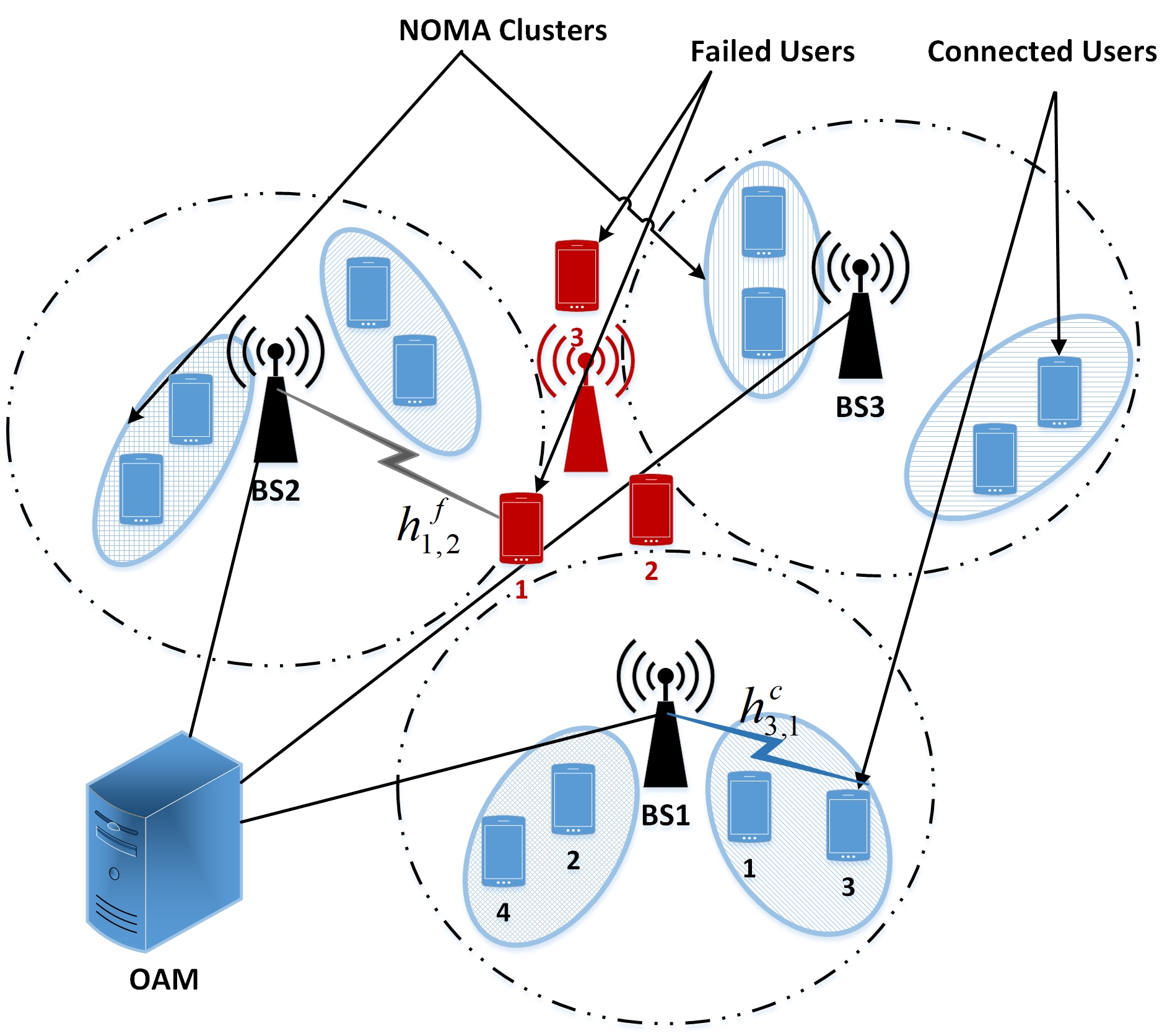}}
	\caption{ Illustration of the network model where the outage BS and its users are depicted in red.}
	\label{fig:SystemModel}
\end{figure}

\section{Problem Formulation}\label{sec:ProblemFormulation}

 In this section, we formulate the joint failed user association and power allocation problem with the objective of maximizing the sum of SE of the failed users subject to maintaining the QoS of other connected users in the system.

 Let us define binary parameter $\beta_{u,\ell,n}$ as the connected user clustering parameter for all $n \in \mathcal{N}$:
\begin{equation}\label{eq:betaDef}
\begin{split}
\beta_{u,\ell,n} &=
 \begin{cases}
    1,  &  \text{if } u \in \mathcal{U}^n \text{ is assigned to cluster } \ell \in \mathcal{L}^n, \\
   0,  &  \text{otherwise}.
  \end{cases}\\
  \end{split}
\end{equation}
 As mentioned in Remark \ref{rem:objective}, $\beta_{u,\ell,n}$ parameters do not change after cell outage compensation. Without loss of generality, we assume that connected users in each compensating cell are sorted according to the descending order of their channel gain as $h_{1,n}^{c} \geq h_{2,n}^{c} \geq \dots \geq h^{c}_{U^n,n}$, where superscript $c$ denotes the connected users of compensating cell $n$. Similarly, let sort the failed users according to their channel gains in descending order as $h_{1,n}^{f} \geq h_{2,n}^{f} \geq \dots \geq h^{f}_{U^f,n}$, where $h_{i,n}^{f}$ represents the $i^{\text{th}}$ largest channel gain among failed users with respect to the $n^{\text{th}}$ compensating cell. Since users in the outage zone are typically further from the compensating BS than their connected users, they experience weaker channel gain than their connected users, and we have
\begin{equation}\label{eq:FailedPrCH}
  h_{1,n}^{c} \geq \dots \geq h^{c}_{U^n,n} \geq h_{1,n}^{f} \geq \dots \geq h^{f}_{U^f,n}, \hspace{2mm} \forall n \in \mathcal{N}.
\end{equation}
Taking into account that the SIC decoding algorithm resolves interference from users with weaker channel gain \cite{ali2016dynamic}, the minimum SE requirement of connected users can be written as follows for all $u \in \mathcal{U}^n$ and $n \in \mathcal{N}$:
  \vspace{+0.3cm}
 \begin{equation*}\label{eq:minRate}
 \begin{split}
   \text{\textbf{C1}}:\sum_{\ell \in \mathcal{L}^n} \beta_{u,\ell,n} \log_2\bigg(1 + &\frac{p^{c}_{u,\ell,n}h^{c}_{u,n}}{h^{c}_{u,n}\sum\limits_{k =1}^{u-1} \beta_{k,\ell,n}p^{c}_{k,\ell,n} + \sigma^2}\bigg) \geq \mathfrak{s}^{\text{min}}_u, \\
   \end{split}
 \end{equation*}
 \normalsize
 where $p^{c}_{u,\ell,n}$ is the power allocated to connected user $u \in \mathcal{U}^n$ if it belongs to cluster $\ell \in \mathcal{L}^n$, $\sigma^2$ is the noise power, and $ \mathfrak{s}^{\text{min}}_u$ is the minimum SE requirement of user $u$.

 Let $\alpha_{u,\ell,n}$ be a binary variable indicating if failed user $u \in  \mathcal{U}^f$ is associated with cluster $\ell \in \mathcal{L}^n$, for all $n \in \mathcal{N}$:
   \vspace{+0.3cm}
\begin{equation}\label{eq:alphaDef}
\begin{split}
\alpha_{u,\ell,n} &=
  \begin{cases}
    1,       &  \text{if } u \in \mathcal{U}^f \text{ is assigned to cluster } \ell \in \mathcal{L}^n,\\
   0,  &  \text{otherwise}.\\
  \end{cases}\\
  \end{split}
\end{equation}\normalsize
 In order to reduce the computational complexity and SIC decoding delay at the receivers \cite{saito2013non} and \cite{wei2017optimal}, each cluster in the  compensating cells  is allowed to serve at most one failed user from the outage zone. As a consequence, we have $U^f \leq  \sum_{n \in \mathcal{N}}^{} L ^n$, and
   \vspace{+0.3cm}
 \begin{equation*}\label{eq:alphaConsCluster}
   \text{\textbf{C2}}:\sum_{u \in \mathcal{U}^f} \alpha_{u,\ell,n} \leq 1, \hspace{2mm} \forall n \in \mathcal{N}, \hspace{2mm}\forall \ell \in \mathcal{L}^n .
 \end{equation*}\normalsize
It should be noted that applying techniques such as load balancing results in uniform distribution of users among neighboring cells. In this way, we can assume that the number of users in the failed cell and in each of compensating cells are almost the same, i.e., $U^f \approx U^n, \forall n \in \mathcal{N}$. Considering NOMA clusters of $q$ users, we have $ L^n = U^n/q \approx U^f/q$. Assuming $N \geq q$, we have $NL^n \geq U^f$, which means that the number of failed users is not greater than the total number of clusters in the compensating cells. Due to the practical size of NOMA clusters and the number of available compensating cells (which can be considered equal to the number of neighbors of the failed cell), $ N \geq q$ is a practical assumption, and hence Constraint C2 is reasonable. In addition, in cases where $N <q$, we need an admission control policy. However, this is beyond the scope of this paper and will be left for future work.

We also assume that each failed user is allowed to be associated with exactly one NOMA cluster in the network, which means that
  \vspace{+0.3cm}
\begin{equation*}\label{eq:alphaCons}
  \text{\textbf{C3}}:\sum_{n \in \mathcal{N}}^{}\sum_{\ell \in \mathcal{L}^n} \alpha_{u,\ell,n} = 1, \hspace{2mm} \forall u \in \mathcal{U}^f.
\end{equation*}\normalsize
In order to successfully apply the SIC algorithm to both connected and failed users, the following conditions need to be satisfied \cite{ali2016dynamic}:
  \vspace{+0.3cm}
\begin{equation*}\label{eq:SICConF}
\begin{split}
 &\text{\textbf{C4}}:\big( \alpha_{u,\ell,n} p^{f}_{u,\ell,n} - \sum_{k \in \mathcal{U}^n} \beta_{k,\ell,n} p^{c}_{k,\ell,n} \big) H^{}_{u-1,\ell,n} \geq\\
  &\hspace{8mm}\mathfrak{p}^{\text{tol}} - (1-\alpha_{u,\ell,n})B,\hspace{1mm}\forall n \in \mathcal{N},\hspace{1mm}\forall \ell \in \mathcal{L}^n,  \hspace{1mm}\forall u \in \mathcal{U}^f\\
 \end{split}
 \end{equation*}
 \begin{equation*}\label{eq:SICConC}
 \begin{split}
 &\text{\textbf{C5}}:\big( \beta_{u,\ell,n} p^{c}_{u,\ell,n} - \sum\limits_{k =1}^{u-1} \beta_{k,\ell,n} p^{c}_{k,\ell,n} \big) H_{u-1,\ell,n}^{} \geq \\
 &\hspace{8mm}\mathfrak{p}^{\text{tol}} - (1-\beta_{u,\ell,n})B,\hspace{1mm}\forall n \in \mathcal{N},\hspace{1mm}\forall \ell \in \mathcal{L}^n, 2\leq u \leq U^n,\\
 \end{split}
\end{equation*}\normalsize
where $p^{f}_{u,\ell,n}$ is the power allocated to failed user $u \in \mathcal{U}^f$ if it belongs to cluster $\ell \in \mathcal{L}^n$, and $\mathfrak{p}^{\text{tol}}$ is the minimum power difference required to distinguish between the signal to be decoded and the remaining non-decoded message signals. $B$ is a sufficiently large number that is used to ensure that these constraints are non-binding  when  $\alpha_{u,\ell,n} =0$ and/or  $\beta_{u,\ell,n} = 0$.   For all $n \in \mathcal{N}$,  $H^{}_{u-1,\ell,n}$ is the minimum channel gain of the users in cluster $\ell \in \mathcal{L}^n$, which is greater than the channel gain of user $u$, and can be determined as\small
  \vspace{+0.3cm}
\begin{equation}\label{eq:Hu-1}
H^{}_{u-1,\ell,n} =
\begin{cases}
    \underset{1\leq k\leq u-1}{\operatorname{min}}\{h_{k,n}^{c}\beta_{k,\ell,n} | \beta_{k,\ell,n} = 1\}, &\text{if } u\in\mathcal{U}^n  , u \neq 1, \\
   \underset{1\leq k\leq U^n}{\operatorname{min}}\{h_{k,n}^{c}\beta_{k,\ell,n} | \beta_{k,\ell,n} = 1\}, &\text{if } u \in \mathcal{U}^f.\\
  \end{cases}
  \end{equation}
\normalsize
 From a practical point of view, we need to put a limit of $\mathfrak{p}^{\text{max}}$ on the available power at each BS $n \in \mathcal{N}$  as follows:
   \vspace{+0.3cm}
  \begin{equation*}\label{eq:Pmax}
  \begin{split}
    \text{\textbf{C6}}: \sum_{\ell \in \mathcal{L}^n}\sum_{u \in \mathcal{U}^f}  \alpha_{u,\ell,n}p_{u,\ell,n}^{f}
+\sum_{\ell \in \mathcal{L}^n} \sum_{u \in \mathcal{U}^n} \beta_{u,\ell,n} p^{c}_{u,\ell,n} &\leq \mathfrak{p}^{\text{max}}.\\
 \end{split}
  \end{equation*}
\normalsize
In the compensation process, the objective function is maximizing sum of SE of the failed users. From \eqref{eq:FailedPrCH}, since channel gain of failed users is weaker than the channel gain of connected users, failed users cannot cancel interference from connected users. Hence, the achievable SE of a failed user $u \in \mathcal{U}^f$ is
  \vspace{+0.3cm}
\begin{equation}\label{eq:rateFailedUser}
\sum_{n \in \mathcal{N}} \sum_{\ell \in \mathcal{L}^n} \alpha_{u,\ell,n} \log_2\bigg(1 + \frac{p^{f}_{u,\ell,n}h^{f}_{u,n}}{h^{f}_{u,n} \sum\limits_{k=1}^{U^n} \beta_{k,\ell,n} p^{c}_{k,\ell,n}  + \sigma^2}\bigg).
\end{equation}
\normalsize
Considering conditions \textbf{C1\hspace{1mm}}-\textbf{\hspace{1mm}C6}, the joint user association and power allocation optimization problem in the outage scenario with the objective of maximizing sum of SE of the failed users can be formulated as Problem \ref{pr:joint}.
\begin{problem}\label{pr:joint}
Given the channel gains of failed and connected users in the network and clustering parameter $\beta_{u,\ell,n}$ for all $n \in \mathcal{N}$, $\ell \in \mathcal{L}^n$ and $u \in \mathcal{U}^n$, the joint user association and power allocation optimization problem in the outage compensation scenario is formulated as follows:\small
\begin{equation*}\label{eq:JointAfter}
\begin{split}
\displaystyle &\max_{\mathcal{A} , \mathcal{P}^{f}, \mathcal{P}^{c}}  \sum_{u \in \mathcal{U}^f} \sum_{n \in \mathcal{N}} \sum_{\ell \in \mathcal{L}^n} \alpha_{u,\ell,n} S_{u,\ell,n}, \\
&\text{subject to:}\\
&\text{\textbf{C1\hspace{1mm}}}- \text{\textbf{C6\hspace{1mm}}}, \\
&\text{\textbf{C7}}: p^{f}_{u,\ell,n} \geq 0, \hspace{2mm} \forall n \in \mathcal{N},\hspace{2mm} \forall \ell \in \mathcal{L}^n, \hspace{2mm} \forall u \in \mathcal{U}^f,\\
&\text{\textbf{C8}}: p^{c}_{u,\ell,n} \geq 0, \hspace{2mm} \forall n \in \mathcal{N}, \hspace{2mm} \forall \ell \in \mathcal{L}^n, \hspace{2mm} \forall u \in \mathcal{U}^n,\\
&\text{\textbf{C9}}: \alpha_{u,\ell,n} \in \{0,1\}, \hspace{2mm} \forall n \in \mathcal{N}, \hspace{2mm} \forall \ell \in \mathcal{L}^n, \hspace{2mm} \forall u \in \mathcal{U}^f,
\end{split}
\end{equation*}\normalsize
where

\begin{equation*}
  S_{u,\ell,n} = \log_2\bigg(1 + \frac{p^{f}_{u,\ell,n}h^{f}_{u,n}}{h^{f}_{u,n} \sum\limits_{k=1}^{U^n} \beta_{u,\ell,n} p^{c}_{k,\ell,n}  + \sigma^2}\bigg),
\end{equation*}
and $\mathcal{A}$,$\mathcal{P}^{f}$ and $\mathcal{P}^{c}$ are sets whose elements are $\alpha_{u,\ell,n}$, $ p^{f}_{u,\ell,n}$ and $p^{c}_{u,\ell,n}$, respectively.
Constraints \textbf{C7} and \textbf{C8} guarantee the nonnegativity of power allocated to all users, and constraint \textbf{C9} defines variable $\alpha_{u,\ell,n}$ as a binary one.
\end{problem}
As we can see, Problem \ref{pr:joint} is a MINLP, which is an NP-hard combinatorial problem \cite{kannan1978computational}.
Accordingly, finding the optimal solution for user association and power allocation requires an exhaustive search that incurs exponential computational complexity. In the following, we propose an efficient low-complexity solution for this problem.

\section{The Proposed Solution}\label{sec:solution}
In this section, we propose a two-step low-complexity sequential solution for Problem \ref{pr:joint}. In the first step, we heuristically find a suboptimal user association scheme, and in the second step, the optimal power allocation is found for the user association scheme from the first step.
\subsection{User Association}\label{sec:UA}
Here, we present our heuristic suboptimal failed user association algorithm. In our proposed algorithm, the goal is to assign failed users to the clusters that can provide them with higher SE. Hence, we need to estimate the amount of power each cluster can allocate to failed users while still fulfilling the QoS requirements of connected users. To do so, we need to know the clustering and power allocation policy of connected users in the compensating cell before the cell outage event. Therefore, we first describe this policy and then present our failed user association algorithm.

\subsubsection{Power allocation and user association before the outage}\label{sec:PABefore}
 We assume that prior to the outage users at BS $n$ are clustered on the basis of the downlink clustering algorithm proposed in \cite{ali2016dynamic}. The aim of the algorithm is to put users with high channel gain in different clusters and pair them with those whose channel gain is low. In addition, we assume that at BS $n \in \mathcal{N}$, optimal power allocation is done to maximize the sum of SE of its users while considering their minimum SE requirement as follows:

\small
\begin{equation}\label{eq:PABefor}
\begin{split}
 \displaystyle &\max_{\mathcal{P}^{'c}_n} \sum_{u \in \mathcal{U}^n} \sum_{\ell \in \mathcal{L}^n} \beta_{u,\ell,n} S_{u,\ell,n}^{'c},\\
&\text{subject to:}\\
&\text{\textbf{C1\hspace{1mm}:\hspace{1mm}}} \sum_{\ell \in \mathcal{L}^n} \beta_{u,\ell,n} S_{u,\ell,n}^{'c} \geq \mathfrak{s}^{\text{min}}_u, \forall u \in \mathcal{U}^n, \\
& \text{\textbf{C2\hspace{1mm}:\hspace{1mm}}} \big( \beta_{u,\ell,n} p^{'c}_{u,\ell,n} - \sum\limits_{k =1}^{u-1} \beta_{k,\ell,n} p^{'c}_{k,\ell,n} \big) H_{u-1,\ell,n}^{c}
 \geq\\
 &\hspace{20mm} \mathfrak{p}^{\text{tol}} - (1-\beta_{u,\ell,n})M, \hspace{2mm} 2 \leq u \leq U^n,\hspace{2mm}\forall \ell \in \mathcal{L}^n, \\
& \text{\textbf{C3\hspace{1mm}:\hspace{1mm}}} \sum_{\ell \in \mathcal{L}^n} \sum_{u \in \mathcal{U}^n} \beta_{u,\ell,n} p^{'c}_{u,\ell,n} \leq \mathfrak{p}^{\text{max}},\\
& \text{\textbf{C4\hspace{1mm}:\hspace{1mm}}} p^{'c}_{u,\ell,n} \geq 0, \hspace{2mm} \forall u \in \mathcal{U}^n, \hspace{2mm} \forall \ell \in \mathcal{L}^n,\\
\end{split}
\end{equation}\normalsize
where
 $p^{'c}_{u,\ell,n}$ is the power that BS $n$ allocates to user $u$ which belongs to cluster $\ell \in \mathcal{L}^n$ at the time of failure, $\mathcal{P}^{'c}_n$ is a set whose elements are $p^{'c}_{u,\ell,n}$, and
  $$S_{u,\ell,n}^{'c} = \log_2\bigg(1 + \frac{p^{'c}_{u,\ell,n}h^{c}_{u,n}}{  h^{c}_{u,n}\sum\limits_{k =1}^{u-1} \beta_{k,\ell,n}p^{'c}_{k,\ell,n}  + \sigma^2}\bigg).$$
  Constraint \textbf{C1} guarantees the minimum SE requirement for the users. Constraint \textbf{C2} denotes the necessary power constraint for efficient SIC decoding in a NOMA cluster. Constraint \textbf{C3} is total power constraint of each BS, and constraint \textbf{C4} ensures nonnegativity of power allocated to each user.

\subsubsection{Failed user association algorithm}\label{sec:failedAssociation}
According to Problem \ref{pr:joint}, after the outage, all clusters allocate all of their power to the failed users except what is required to satisfy the SIC decoding and the minimum SE requirement of connected users. Knowing the amount of power required by connected users of each cluster, we can determine how large the SE of a failed user would be if it connected to that cluster.We can then assign each failed user to the cluster that offers the highest SE.

 As Problem \ref{pr:joint} indicates, the minimum power that needs to be allocated to the user with the highest channel gain in cluster $\ell \in \mathcal{L}^n$, which is denoted as $u_{m^\ell}$ in the sequel, can be determined on the basis of its QoS requirement as follows:
 \begin{equation}\label{eq:minPowerMaxUser}
   p_{u_{m^\ell},\ell,n}^{c} = \frac{\sigma^2}{h_{u_{m^\ell},n}^{c}}\left( 2^{\mathfrak{s}^{\text{min}}_{u_{m^\ell}}} - 1\right), \hspace{2mm} \forall n \in \mathcal{N}, \hspace{2mm}\forall \ell \in \mathcal{L}^n.
 \end{equation}
The amount of power that needs to be allocated to other connected users in each cluster can be approximated according to the following Proposition. In this proposition, following the literature in this context, we apply asymptotic analysis in inference-limited case.

\begin{proposition}\label{th:Power}
  Consider Problem \ref{pr:joint} and Subsection \ref{sec:PABefore}. We assume that the noise power $\sigma^2$  is negligible compared to the interference term, and $\mathfrak{p}^{\text{tol}} << 1$. The amount of power that needs to be allocated to connected users after the outage event, i.e., solution of Problem \ref{pr:joint}, can be approximated as
    \vspace{+0.3cm}
  \begin{equation}\label{eq:PApprox}
  p^{c}_{u,\ell,n} \approx \delta_{\ell,n}p^{'c}_{u,\ell,n}, \hspace{2mm}\forall n \in \mathcal{N}, \hspace{2mm}\forall \ell \in \mathcal{L}^n, \hspace{2mm}\forall u \in \mathcal{U}^n,
  \end{equation}
  where $p^{'c}_{u,\ell,n}$ is the solution of the problem in \eqref{eq:PABefor}, and
    \vspace{+0.3cm}
  \begin{equation}\label{eq:highestUserPower}
   \delta_{\ell,n} = \frac{p_{u_{m^\ell},\ell,n}^{c}}{p_{u_{m^\ell} , \ell,n}^{'c}}, \hspace{2mm} \forall n \in \mathcal{N}, \hspace{2mm}\forall \ell \in \mathcal{L}^n,
   \end{equation}
   where $p_{u_{m^\ell},\ell,n}^{c}$ can be calculated from \eqref{eq:minPowerMaxUser}.
\end{proposition}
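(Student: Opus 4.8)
The plan is to exploit a \emph{scale invariance} that the interference-limited approximation induces on the connected users' constraints: once $\sigma^2$ is dropped against the interference and $\mathfrak{p}^{\text{tol}}\ll 1$ is used, every constraint binding a non-strongest connected user depends only on the \emph{shape} of the cluster's power vector and not on its overall scale. Consequently the connected-power profile in a cluster, both before and after the outage, must take the form (strongest-user power)$\,\times\,$(a fixed shape determined by the QoS data), and the two profiles can differ only through the scalar set by the strongest user --- which is exactly the ratio $\delta_{\ell,n}$ of \eqref{eq:highestUserPower}. Throughout I fix a cluster $\ell\in\mathcal{L}^n$ and index its connected users $1,\dots,U^n$ in descending channel gain, so that user $1$ is $u_{m^\ell}$ and \textbf{C1}/\textbf{C5} couple user $u$ only to the stronger users $k<u$.

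First I would characterize the post-outage allocation. In Problem \ref{pr:joint} the failed-user rate \eqref{eq:rateFailedUser} is strictly decreasing in the aggregate connected power $\sum_k \beta_{k,\ell,n}p^c_{k,\ell,n}$ of its host cluster, while \textbf{C6} ties that power to the budget left for the failed user; hence at optimality each connected user is driven to its \emph{minimum feasible} power, i.e. \textbf{C1} and \textbf{C5} are active. In the interference-limited regime the active constraint for each $u\ge 2$ collapses to the recursion
\begin{equation*}
p^{c}_{u,\ell,n}\approx c_u\sum_{k=1}^{u-1}\beta_{k,\ell,n}\,p^{c}_{k,\ell,n},\qquad c_u=\max\!\big(2^{\mathfrak{s}^{\text{min}}_u}-1,\,1\big),
\end{equation*}
where the QoS form $2^{\mathfrak{s}^{\text{min}}_u}-1$ comes from \textbf{C1} at equality and the floor $1$ from the SIC constraint \textbf{C5} with $\mathfrak{p}^{\text{tol}}\!\to\!0$; the seed $p^{c}_{1,\ell,n}=p^{c}_{u_{m^\ell},\ell,n}$ is fixed by \textbf{C1} of the strongest user, giving \eqref{eq:minPowerMaxUser}. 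Solving the recursion yields $p^{c}_{u,\ell,n}\approx p^{c}_{u_{m^\ell},\ell,n}\,g_u$ with $g_u=c_u\prod_{j=2}^{u-1}(1+c_j)$ a shape that depends only on the (given) QoS requirements.

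Next I would show the pre-outage solution of \eqref{eq:PABefor} shares this very shape. Writing its cluster sum-rate in terms of the partial sums $S_u=\sum_{k\le u}\beta_{k,\ell,n}p^{'c}_{k,\ell,n}$, a short computation using $h^{c}_{1,n}\ge\cdots\ge h^{c}_{U^n,n}$ shows the objective is strictly increasing in every $S_u$ with $u<U^n$; since increasing any non-strongest power can only \emph{decrease} the earlier partial sums at fixed cluster budget, the optimum again pins every user $u\ge 2$ to its minimum, i.e. to the \emph{same} recursion with the \emph{same} $c_u$ (the choice between the QoS and SIC floor is scale-free, so the same branch binds in both problems). Hence $p^{'c}_{u,\ell,n}\approx p^{'c}_{u_{m^\ell},\ell,n}\,g_u$ with the identical $g_u$, and dividing the two profiles makes the shape cancel:
\begin{equation*}
\frac{p^{c}_{u,\ell,n}}{p^{'c}_{u,\ell,n}}\approx\frac{p^{c}_{u_{m^\ell},\ell,n}}{p^{'c}_{u_{m^\ell},\ell,n}}=\delta_{\ell,n},\qquad \forall u\in\mathcal{U}^n,
\end{equation*}
which is precisely \eqref{eq:PApprox}--\eqref{eq:highestUserPower}.

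I expect the pre-outage step to be the main obstacle: unlike the post-outage problem, \eqref{eq:PABefor} \emph{maximizes} connected rate, so it is not immediate that its weak users sit at minimum power, and the claim rests entirely on the monotonicity-in-$S_u$ computation (whose clean sign depends on the channel ordering \eqref{eq:FailedPrCH}). Two secondary points also need care: verifying that the same constraint --- QoS versus SIC --- is the binding one in both problems, which I handle through the scale-free comparison $2^{\mathfrak{s}^{\text{min}}_u}-1$ versus $1$; and controlling the error absorbed by the ``$\approx$'', namely the discarded $\sigma^2/h^{c}_{u,n}$ terms and the $O(\mathfrak{p}^{\text{tol}})$ SIC slack. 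Finally, since \textbf{C6} couples clusters through the per-BS budget, I would run the argument per cluster at its own total power, so that the cross-cluster split only rescales the common seed and leaves the ratio $\delta_{\ell,n}$ intact.
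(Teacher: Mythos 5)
Your argument is sound and reaches \eqref{eq:PApprox}--\eqref{eq:highestUserPower}, but by a genuinely different route than the paper. The paper's proof is a short invariance-and-feasibility check: in the interference-limited regime every non-strongest connected user's SINR is a ratio of within-cluster powers, so multiplying the entire pre-outage profile $p^{'c}_{u,\ell,n}$ by the common factor $\delta_{\ell,n}$ leaves each such user's SE unchanged (this is exactly \eqref{eq:SEAfter}); the assumption $\mathfrak{p}^{\text{tol}}\ll 1$ keeps the SIC constraints \textbf{C5} satisfied under down-scaling; and $\delta_{\ell,n}$ is chosen so that the strongest user lands exactly on its QoS minimum \eqref{eq:minPowerMaxUser}. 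Nothing is said there about \emph{why} the post-outage optimum must take this form --- the scaled profile is merely verified to be feasible and QoS-preserving and then declared an appropriate approximation. You instead characterize the optimum on both sides: post-outage, the connected users' lower bounds must all bind (your reduction argument is valid, since decreasing any connected power relaxes the remaining \textbf{C1}/\textbf{C4}/\textbf{C5}/\textbf{C6} constraints and strictly increases the failed user's rate), giving the recursion $p^{c}_{u,\ell,n}\approx c_u\sum_{k<u}\beta_{k,\ell,n}p^{c}_{k,\ell,n}$; pre-outage, your partial-sum monotonicity pins the same chain tight, so the two profiles share the shape $g_u$ and differ only through the seed, whose ratio is $\delta_{\ell,n}$. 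Your route buys an explanation the paper's verification-style proof leaves implicit, at the cost of the pre-outage step you yourself flag as the main obstacle. One concrete caution there: if the noise is dropped for \emph{all} users $u\ge 2$, the pre-outage cluster objective telescopes to $\log_2\big((N_1+S_1)\,S_{U^n}/(N_1 S_1)\big)$ with $N_1=\sigma^2/h^{c}_{1,n}$, which is independent of the intermediate partial sums and strictly decreasing in $S_1$, so the monotonicity you rely on degenerates in the strict interference-limited limit. Your tight-chain conclusion therefore requires retaining the noise terms in the objective (with strictly ordered channel gains) while simultaneously approximating the constraints as homogeneous to extract a scale-free shape --- a mild tension, but one that sits at the same level of approximation as the paper's own proof, which sidesteps the issue entirely by never arguing optimality.
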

 \begin{proof}
   Due to the assumption of negligibility of noise variance $\sigma^2$ compared to the interference term, prior to the outage the SE of connected user $u$ can be approximated as
   \vspace{+0.3cm}
 \begin{equation}\label{eq:NOMARate}
 \begin{split}
   S_{u,\ell,n}^{'c} &\approx \log_2\bigg(1 + \frac{p^{'c}_{u,\ell,n}h^{c}_{u,n}}{  h^{c}_{u,n}\sum\limits_{k =1}^{u-1} \beta_{k,\ell,n}p^{'c}_{k,\ell,n}}\bigg),\\
    &\hspace{8mm} \forall n \in \mathcal{N}, \hspace{2mm} \forall \ell \in \mathcal{L}^n, \hspace{2mm} u \in \mathcal{U}^n, \hspace{2mm}u \neq u_{m^\ell} .
   \end{split}
 \end{equation}
 In the compensation step, let  $p^{c}_{u,\ell,n} = \delta_{\ell,n}p^{'c}_{u,\ell,n}$. Then, the SE of user $u$ can be approximated as\small
 \vspace{+0.3cm}
  \begin{equation}\label{eq:SEAfter}
  \begin{split}
    S_{u,\ell,n}^{c} &\approx \log_2\bigg(1 + \frac{ \delta_{\ell,n}p^{'c}_{u,\ell,n}h^{c}_{u,n}}{  h^{c}_{u,n}\sum\limits_{k =1}^{u-1} \beta_{k,\ell,n} \delta_{\ell,n}p^{'c}_{k,\ell,n}}\bigg)\\
    &\stackrel{(a)}{=} S_{u,\ell,n}^{'c}, \hspace{2mm} \forall n \in \mathcal{N}, \hspace{2mm} \forall \ell \in \mathcal{L}^n, \hspace{2mm} u \in \mathcal{U}^n, \hspace{2mm} u \neq u_{m^\ell},
    \end{split}
  \end{equation}\normalsize
  where $(a)$ follows from \eqref{eq:NOMARate}. Note that according to \eqref{eq:FailedPrCH}, after the outage, connected users can resolve interference from the failed users, and we do not need to consider the interference from failed users in \eqref{eq:SEAfter}.

   According to \eqref{eq:SEAfter}, reducing  the power of all users in cluster $\ell \in \mathcal{L}^n$  by factor $\delta_{\ell,n}$ in the compensation step does not change their SE. The assumption of $\mathfrak{p}^{\text{tol}} << 1$ implies that this power allocation policy also meets the SIC decoding power constraints in \textbf{C5}. Consequently, \eqref{eq:PApprox} is an appropriate approximation for the first step of of the solution to Problem \ref{pr:joint}, which completes the proof.
 \end{proof}
 Now from \eqref{eq:PApprox},  we can propose an efficient scheme for the association of failed users to the clusters of compensating cells according to the following steps:
 \begin{itemize}
   \item  \textit{Determine the amount of power each cluster can allocate to a failed user}: According to Proposition \ref{th:Power}, the amount of power that cluster $\ell \in \mathcal{L}^n$ can allocate to a failed user can be calculated as
       \vspace{+0.3cm}
 \begin{equation}\label{eq:culsterPowerBudget}
 \Delta_{\ell,n} = (1 - \delta_{\ell,n})\sum_{u \in U^n}\beta_{u,\ell,n} p^{'c}_{u,\ell,n}, \hspace{2mm} \forall n \in \mathcal{N}, \hspace{2mm} \forall \ell \in \mathcal{L}^n.
 \end{equation}
   \item \textit{Determine the SE of each failed user if it is associated with cluster $\ell \in \mathcal{L}^n$}: According to \eqref{eq:PApprox} and \eqref{eq:culsterPowerBudget}, the SE of user $u \in \mathcal{U}^f$ if it connects to cluster $\ell \in \mathcal{L}^n$ is\small
       \vspace{+0.3cm}
       \begin{equation}\label{eq:failedUserRate}
       \begin{split}
   S^{f}_{u,\ell,n} &=
    \log_2\bigg(1 + \frac{\Delta_{\ell,n}h^{f}_{u,n}}{  h^{f}_{u,n}\sum\limits_{k =1}^{U^n} \beta_{u,\ell,n} \delta_{\ell,n}p^{'c}_{k,\ell,n}  + \sigma^2}\bigg),\\
     &\hspace{10mm} \forall n \in \mathcal{N}, \hspace{2mm} \forall \ell \in \mathcal{L}^n, \hspace{2mm} \forall u\in U^f.
     \end{split}
 \end{equation}\normalsize

   \item \textit{Determine the efficient failed user association scheme}: To maximize the sum of SE of the failed users, the user association scheme can be determined according to $S^{f}_{u,\ell,n}$ values in an iterative manner. At iteration $i$, triple $(u^*_i,\ell^*_i,n^*_i)$ is determined as follows:
\begin{equation}\label{eq:trp}
  (u^*_i,\ell^*_i,n^*_i) =  \underset{n,\ell,u}{\text{argmax}}\hspace{1mm}\{S^{f}_{u,\ell,n}\}.
\end{equation}
User $u^*_i \in \mathcal{U}^f$ will be associated with cluster $\ell^*_i$ of BS $n^*_i$. According to \textbf{C2} and \textbf{C3} in Problem \ref{pr:joint}, since each cluster can  at most serve one failed user, and also each user can be served just by one cluster, cluster $\ell^*_i $ and user $u^*_i$ should be removed from the candidate list in the next iterations. To do so, we can update the set $\{S^{f}_{u,\ell,n}\}$ at iteration $i$ as follows:
\begin{equation}\label{eq:SEUpdate1}
S^{f}_{u^*_i,\ell,n} = 0, \hspace{2mm} \forall n \in \mathcal{N}, \hspace{2mm}\forall \ell \in \mathcal{L}^n,
\end{equation}
\begin{equation}\label{eq:SEUpdate2}
S^{f}_{u,\ell^*_i,n^*_i} = 0, \hspace{2mm} \forall u \in \mathcal{U}^f.
\end{equation}
Due to the fact that there are $U^f$ failed users in the network, user association scheme can be determined in $U^f$ iterations.
 \end{itemize}

 This failed user association algorithm is outlined in Algorithm \ref{alg:userAssociation}.
 \begin{algorithm}[t]
	\caption{: Heuristic user association algorithm}\label{alg:userAssociation}
	Initialization: $\alpha_{u,\ell,n} = 0, \hspace{2mm} \forall u \in \mathcal{U}^f, n \in \mathcal{N}, \ell \in \mathcal{L}^n$.
	\begin{algorithmic}
		\ForAll{$n \in \mathcal{N}, \ell \in \mathcal{L}^n, u \in \mathcal{U}^f$ }
		\State Calculate $S^{f}_{u,\ell,n}$ using \eqref{eq:failedUserRate};
		\EndFor
        \ForAll{$i = 1, 2, \dots, U^f$}
        \State Determine $(u^*_i,\ell^*_i,n^*_i)$ according to \eqref{eq:trp};
        \State $\alpha_{u^*_i,\ell^*_i,n^*_i} = 1$;
        \State Update $S^{f}_{u,\ell,n}$ using \eqref{eq:SEUpdate1}-\eqref{eq:SEUpdate2};
        \EndFor
	\end{algorithmic}
\end{algorithm}

\subsection{Power Allocation}\label{sec:PA}
As discussed above, BSs operate on different frequencies and do not interfere with each other. Thus, given the user association scheme, i.e., $\mathcal{A}$,
the power allocation problem can be solved separately for the BSs that serve at least one failed user.
BSs that do not serve any failed users continue to operate as before.
Therefore, the power allocation problem for each BS $n$ can be written as Problem \ref{pr:power}.
\begin{problem}\label{pr:power}
Given the user association scheme, i.e., $\mathcal{A}$, and clustering parameter $\beta_{u,\ell,n}$ for all $\ell \in \mathcal{L}^n$ and $u \in \mathcal{U}^n$, the power allocation problem for each BS $n$ is formulated as follows:
\small
\begin{equation*}\label{eq:power_control}
  \begin{split}
  &\max_{\mathcal{P}^{f}, \mathcal{P}^{c}}  \sum_{u \in \mathcal{U}^f} \sum_{\ell \in \mathcal{L}^n} \alpha_{u,\ell,n} \log_2\bigg(1 + \frac{p^{f}_{u,\ell,n}h^{f}_{u,n}}{h^{f}_{u,n} \sum\limits_{k=1}^{U^n} \beta_{k,\ell,n} p^{c}_{k,\ell,n}  + \sigma^2}\bigg), \\
&\text{subject to:}\\
&\text{\textbf{C1}}: \sum_{\ell \in \mathcal{L}^n} \beta_{u,\ell,n} \log_2\bigg(1 + \frac{p^{c}_{u,\ell,n}h^{c}_{u,n}}{h^{c}_{u,n}\sum\limits_{k =1}^{u-1} \beta_{u,\ell,n}p^{c}_{k,\ell,n}  + \sigma^2}\bigg) \geq \mathfrak{s}^{\text{min}}_u,\\
 &\hspace{70mm} \forall u \in \mathcal{U}^n, \\
&\text{\textbf{C2}}:\big( \alpha_{u,\ell,n}  p^{f}_{u,\ell,n} - \sum_{k \in \mathcal{U}^n} \beta_{k,\ell,n} p^{c}_{k,\ell,n} \big) H^{}_{u-1,n} \geq \\
&\hspace{24mm}\mathfrak{p}^{\text{tol}} - (1-\alpha_{u,\ell,n})B,
\hspace{2mm} \forall u \in \mathcal{U}^f,  \hspace{2mm} \forall \ell \in \mathcal{L}^n,\\
&\text{\textbf{C3}}: \big( \beta_{u,\ell,n} p^{c}_{u,\ell,n} - \sum\limits_{k =1}^{u-1} \beta_{k,\ell,n} p^{c}_{k,\ell,n} \big) H_{u-1,\ell,n}^{} \geq\\
&\hspace{24mm} \mathfrak{p}^{\text{tol}} - (1-\beta_{u,\ell,n})B,
 \hspace{2mm} \forall u \in \mathcal{U}^n,  \hspace{2mm} \forall \ell \in \mathcal{L}^n, \\
&\text{\textbf{C4}}: \sum_{\ell \in \mathcal{L}^n}\sum_{u \in \mathcal{U}^f}  \alpha_{u,\ell,n}p_{u,\ell,n}^{f}
+\sum_{\ell \in \mathcal{L}^n} \sum_{u \in \mathcal{U}^n} \beta_{u,\ell,n} p^{c}_{u,\ell,n} \leq \mathfrak{p}^{\text{max}},  \\
&\text{\textbf{C5}}: p^{f}_{u,\ell,n} \geq 0, \hspace{2mm} \forall u \in \mathcal{U}^f,\\
&\text{\textbf{C6}}: p^{c}_{u,\ell,n} \geq 0, \hspace{2mm} \forall u \in \mathcal{U}^n.
  \end{split}
\end{equation*}\normalsize
\end{problem}
Constraints \textbf{C2}, \textbf{C3} , and \textbf{C4} are affine. The SE function of variables $p^{f}_{u,\ell,n}$ and $p^{c}_{k,\ell,n}$ in the objective function and constraint \textbf{C1} are concave functions. Since the proof procedure of the concavity of this function is similar to that in \cite[Lemma 1]{ali2018downlink}, it is omitted for the sake of brevity. Therefore, Problem \ref{pr:power} is a convex optimization problem. Mathematical-based methods inevitably rely
on network parameters and need to be performed for every system realization. This makes them a computational burden and limits their implementation in real time. Traditional iterative methods, such as the waterfilling algorithm, also suffer from computational complexity in practical implementation \cite{sun2018learning}. We propose exploiting DNN to
approximate this problem and utilize it to predict the power allocation scheme with much lower online complexity than optimization-oriented methods. We adopt a
feedforward fully connected DNN, which can exhibit a universal function approximation property \cite{hornik1989multilayer, leshno1993multilayer}.

A feedforward fully
connected DNN is composed of an input layer, multiple hidden layers, and one output layer of neurons. The general structure of this DNN is depicted in Figure \ref{fig:DNN}. An input vector of $D^{0}$ dimension is fed to the network through $D^{0}$ number of neurons in the input layer. Then, it passes the hidden layers, where each hidden layer $m$ has $D^{m}$ neurons. Finally, the output vector with dimension $D^{M}$ is achieved from the output layer $M$ with $D^{M}$ neurons. Each neuron $i$ in layer $k$ takes inputs from the previous layer $k-1$ and returns an output $z_{i,k}$ which is derived by the following equation:
\begin{equation}
z_{i,k} = \emph{f}_{i,k}(\sum\limits_{j=1}^{D^{k-1}}w_{j,k-1}z_{j,k-1}+b_{i,k}),
\end{equation}
where $\emph{f}_{i,k}$ is the activation function of neuron $i$ in layer $k$, $w_{j,k-1}$ denotes the weight of the output of neuron $j$ in the previous layer $k-1$, and $b_{i,k}$ denotes the bias term. The input to the activation function is the weighted sum of all the outputs from the previous layer plus a bias term.
 \begin{figure}[!t]
	\centering
	\centerline{\includegraphics[width=0.8\linewidth]{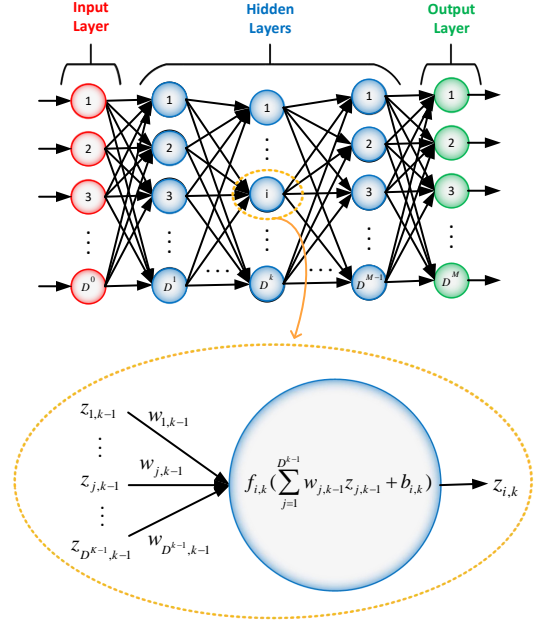}}
	\caption{Illustration of a feedforward fully connected DNN.}
	\label{fig:DNN}
\end{figure}

In the training process, weights and biases are updated iteratively using a supervised procedure that aims to minimize the error between the predicted and target values of the output layer. The loss function which calculates the error can be any suitable measure of the distance between the predicted and target values.
Once the training process converges and the weights and biases are configured, DNN can be used to predict the output for the inputs that have not been used in the training process. In fact, the DNN learns the mapping between the input and the output. In the sequel, we describe the detailed architecture and techniques of our proposed DNN-based method for Problem \ref{pr:power}.

\subsubsection{DNN architecture} \label{sec:DNNarc}
The power allocation problem, i.e., Problem \ref{pr:power}, can be considered as an unknown mapping $\mathscr{M}$ from the channel gains between a BS and its users (both connected and failed users) to the optimal power allocation for the failed and connected users in that BS, i.e.,


\begin{equation}
\mathscr{M}: \mathscr{H}=[\mathfrak{h}_{u',\ell'}] \mapsto \mathscr{P}^{*}=[\mathfrak{p}_{u',\ell'}^{*}]
\end{equation}
where superscript $*$ indicates the optimal values of the corresponding variable. $\mathfrak{h}_{u',\ell'}$ is equal to the channel gain of user $u'$ assigned to cluster $\ell' \in \mathcal{L}^n$ of BS $n$. $\mathfrak{p}_{u',\ell'}^{*}$ is the optimal power allocated to user $u'$ assigned to cluster $\ell' \in \mathcal{L}^n$ of BS $n$.

In order to emulate this mapping, the DNN is trained by a training set that includes $K_S$ pairs of the input and the target output, i.e., $\{ \mathscr{H}_{s}, \mathscr{P}^{*}_{s} \}_{s=1}^{K_S}$.

As discussed above, the DNN is composed of an input layer, multiple hidden layers, and one output layer. In our proposed DNN, activation functions $\emph{f}_{i,k}$ of the neurons in each layer are assumed to be ReLU which is defined as follows:
\begin{equation}
\emph{f}_{i,k}(X) =
  \begin{cases}
    X,       & \quad \text{if } X>0,\\
   0,  & \quad \text{otherwise}.\\
  \end{cases}
\end{equation}

\subsubsection{Dataset generation}
To build a dataset, we first position BSs in the network. Then, users are randomly distributed in the network area. Assuming that one BS experiences failure, failed users are assigned to the compensating cells based on Algorithm \ref{alg:userAssociation}. Then, Problem \ref{pr:power} is solved by an off-the-shelf interior point method solver for every cell that serves at least one failed user. This procedure is repeated until we have a large enough dataset, which is then divided into training, test, and validation datasets. We assume that we have $K_S$ pairs of $\{ \mathscr{H}_{s}, \mathscr{P}^{*}_{s} \}$ in our training dataset, and the rest are divided between test and validation datasets.

\subsubsection{Data preprocessing}
For better performance in the training phase, a preprocessing step is required to prepare the input and output data. We first express the channel gain values in logarithmic units to avoid numerical problems. We also notice that switching the columns and rows of the input (i.e., $\mathscr{H}$) does not affect the output when its columns are also switched accordingly. Therefore, we can produce new samples and augment the training data. For example, consider a base station with four connected users and two failed users assigned to two NOMA clusters, as shown in Figure \ref{fig:RowPer}. If the first and second rows of the matrix $\mathscr{H}$ are permutated (producing matrix $\mathscr{H'}$), then the same rows of the power allocation matrix $\mathscr{P}$ are also permutated accordingly (producing matrix $\mathscr{P'}$). The same property exists for column permutation as shown in Figure \ref{fig:ColPer}. The first and second columns of the matrix $\mathscr{H}$ are switched, and the same columns of the power allocation matrix are also switched.  As a result, a new sample is generated without the need to solve the optimization problem.
 \begin{figure}[!t]
	\centering
	\centerline{\includegraphics[width=\linewidth]{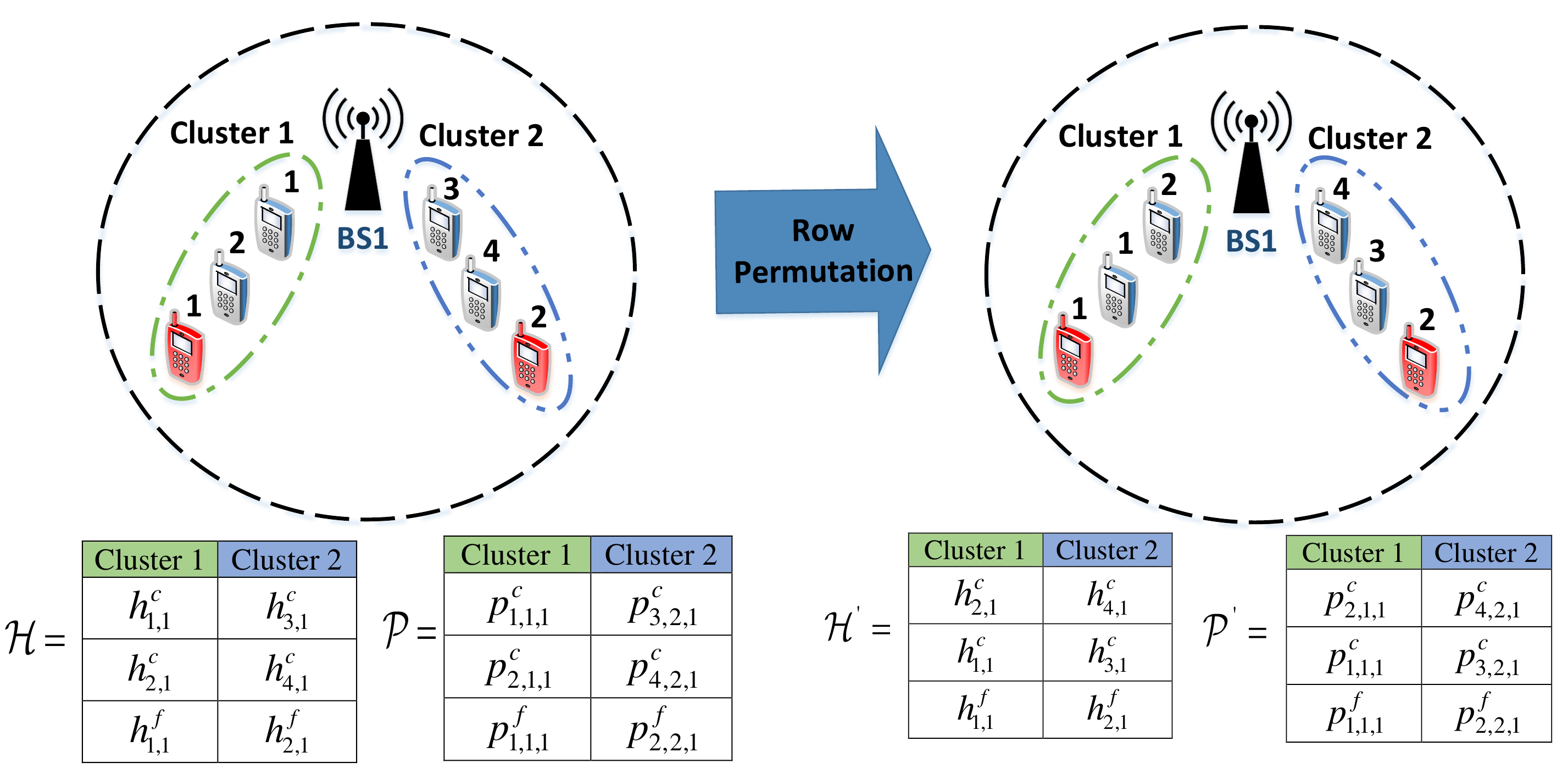}}
	\caption{Row permutation for both the input and output of the DNN.}
	\label{fig:RowPer}
\end{figure}

 \begin{figure}[!t]
	\centering
	\centerline{\includegraphics[width=\linewidth]{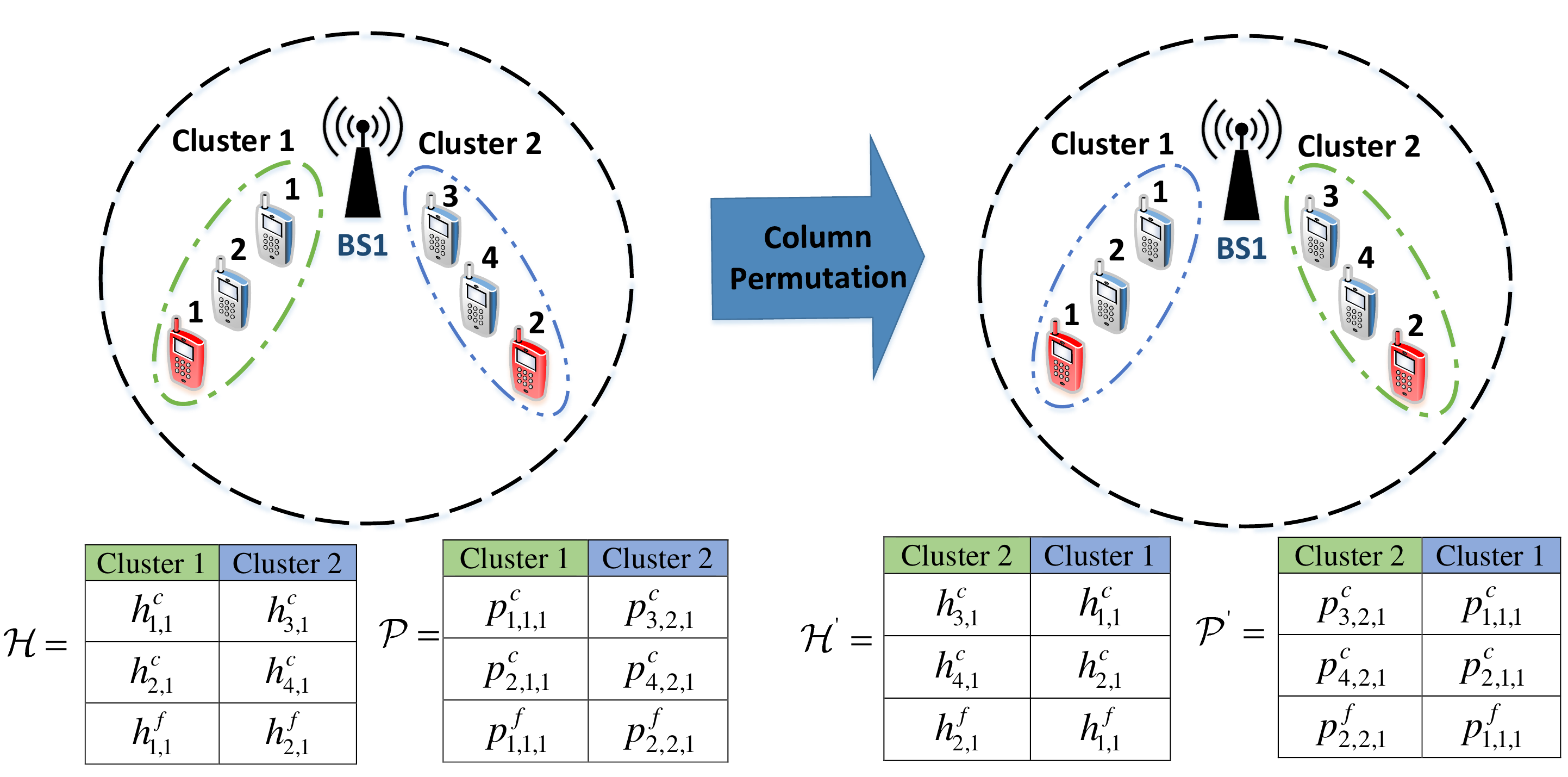}}
	\caption{Column permutation for both the input and output of the DNN.}
	\label{fig:ColPer}
\end{figure}

\subsubsection{Training procedure}
 First, weights and biases of each neuron in the DNN is randomly initialized. Let us denote the matrices of weights and biases of the whole DNN by  $\mathscr{W}$ and $\mathscr{B}$, respectively. In each iteration of the training process, a sample $\mathscr{H}_{s}$ from the training set is fed forward as input to the DNN, and predicted output $\hat{\mathfrak{p}}_{u',\ell'}$ is obtained. To calculate the error between the desired values and predicted values, we use the mean square error (MSE) function, which is a widely used loss function in regression problems \cite{goodfellow2016deep}. MSE is the average of squares of error between the predicted and optimal power values.

 The error is then backpropagated through the DNN to gradually adjust the weights and biases.
 We use ADAM optimizer with Nesterov momentum \cite{dozat2016incorporating}, which is a fast and powerful learning algorithm to update weights and biases according to the error.

 In this paper, we consider batch learning. In batch learning, $k$ samples are randomly chosen at each iteration, and the mean of the $k$ corresponding gradients is utilized in the updating procedure of weights and biases. When all samples in the entire dataset is passed through the DNN one time, an epoch is elapsed. If the batch size is $k$, $\frac{K_S}{k}$ number of parameter updates are performed during an epoch.

\subsubsection{Testing procedure} \label{sec:DNNtest}
To evaluate the performance and assess the predictive power and generalization of the DNN, we need a test dataset. The channel gains of the test dataset are passed through the network and predicted power values are obtained which are then compared against optimal power values.

\begin{remark}\label{rem:multipleBS}
Our proposed algorithm can be extended to the case of multi-cell outage experience in a straightforward manner. After the user assignment procedure, the proposed DNN approach for solving Problem \ref{pr:power} can be applied for any compensating cell with at least one failed user regardless of the number of failed cells.
\end{remark}
\section{Proposed solution in presence of co-channel interference}\label{sec:cochannelInt}
In this section, for more practicality, we assume that neighboring cells can operate on the same subchannel and cause co-channel interference to one another.
Let us denote by $\mathcal{B}$ the set of available subchannels. In addition, let $\zeta_{b,l,n}$ be the subchannel allocation parameter which equals to one if subchannel $b \in \mathcal{B}$ is allocated to the cluster $\ell \in \mathcal{L}^n$ of BS $n \in \mathcal{N}$. We follow the approach of \cite{abdelnasser2015resource} in taking the co-channel interference into account which is a commonly used approach in the literature. Our objective is to perform user association and power allocation in a way that can sustain the highest possible co-channel interference level. In this regard, a constraint is added to the problem limit the co-channel interference on each UE in a NOMA cluster to a maximum threshold. Then, we try to maximize the worst-case spectral efficiency of the failed users assuming that the co-channel interference constraint holds with equality. Please note that clusters in the same BS operate on different subchannels and do not cause interference to each other.

Problem \ref{pr:joint} in presence of co-channel interference is reformulated as follows.

\noindent\textbf{Problem 3.} Given the channel gains of failed and connected users in the network, clustering parameter $\beta_{u,\ell,n}$, and subchannel allocation parameter $\zeta_{b,l,n}$ for all $\forall b \in \mathcal{B}$, $n \in \mathcal{N}$, $\ell \in \mathcal{L}^n$ and $u \in \mathcal{U}^n$, the joint user association and power allocation optimization problem in the outage compensation scenario considering co-channel interference is formulated as follows:
\small
\begin{equation*}\label{eq:CointJointAfter}
\begin{split}
\displaystyle &\max_{\mathcal{A} , \mathcal{P}^{f}, \mathcal{P}^{c}}  \sum_{u \in \mathcal{U}^f} \sum_{n \in \mathcal{N}} \sum_{\ell \in \mathcal{L}^n} \alpha_{u,\ell,n} S_{u,\ell,n}^{}, \\
&\text{subject to:}\\
& \text{\textbf{C1}}: \log_2\bigg(1 + \frac{p^{c}_{u,\ell,n}h^{c}_{u,n}}{  h^{c}_{u,n}\sum\limits_{k =1}^{u-1} \beta_{k,\ell,n}p^{c}_{k,\ell,n}  + |\mathcal{D}_{n}| \times \mathfrak{I}^{\text{max}} + \sigma^2}\bigg) \geq \mathfrak{s}^{\text{min}}_{u},\\
& \hspace{30mm} \forall u \in \mathcal{U}^n, n \in \mathcal{N},\\
&\text{\textbf{C2\hspace{1mm}}}- \text{\textbf{C9\hspace{1mm}}}, \\
&\text{\textbf{C10}}: \\
& \zeta_{b,l',n'} \zeta_{b,l,n} \left( \sum_{u \in \mathcal{U}^f}  \alpha_{u,\ell,n}p_{u,\ell,n}^{f} + \sum_{u \in \mathcal{U}^n} \beta_{u,\ell,n} p^{c}_{u,\ell,n} \right) \times \alpha_{u',\ell,n'} h^{f}_{u,n}\\
  &\hspace{45mm}\leq \zeta_{b,l',n'} \zeta_{b,l,n} \mathfrak{I}^{\text{max}},\\
    &\hspace{12mm} \forall b \in \mathcal{B}, \forall n \in \mathcal{N}, \forall n' \in \mathcal{D}_{n}, \forall \ell \in \mathcal{L}^n, \forall \ell' \in \mathcal{L}^{n'}, \forall u' \in \mathcal{U}^f, \\
&\text{\textbf{C11}}: \\
& \zeta_{b,l',n'} \zeta_{b,l,n} \left( \sum_{u \in \mathcal{U}^f}  \alpha_{u,\ell,n}p_{u,\ell,n}^{f} + \sum_{u \in \mathcal{U}^n} \beta_{u,\ell,n} p^{c}_{u,\ell,n} \right) \times \beta_{u',\ell,n'} h^{c}_{u,n} \\
&\hspace{45mm} \leq \zeta_{b,l',n'} \zeta_{b,l,n} \mathfrak{I}^{\text{max}},\\
 &\hspace{12mm} \forall b \in \mathcal{B}, \forall n \in \mathcal{N}, \forall n' \in \mathcal{D}_{n}, \forall \ell \in \mathcal{L}^n, \forall \ell' \in \mathcal{L}^{n'}, \forall u' \in \mathcal{U}^c,
\end{split}
\end{equation*}
\normalsize
where $\mathcal{D}_{n}$ is the set of neighboring cells of BS $n$, and $\mathfrak{I}^{\text{max}}$ is the maximum co-channel interference that can be tolerated by failed and connected users in other neighboring cells, and
$$ S_{u,\ell,n}^{} = \log_2\bigg(1 + \frac{p^{f}_{u,\ell,n}h^{f}_{u,n}}{h^{f}_{u,n} \sum\limits_{k=1}^{U^n} \beta_{u,\ell,n} p^{c}_{k,\ell,n}  + |\mathcal{D}_{n}| \times \mathfrak{I}^{\text{max}} + \sigma^2}\bigg).$$
 Constraints \textbf{C10} and \textbf{C11} imply that the co-channel interference that one BS can generate for the failed and connected users in the neighboring cells is limited to $\mathfrak{I}^{\text{max}}$. Constraints \textbf{C10} and \textbf{C11} is active only if two clusters in a BS and its neighboring cell use the same subchannel.

In presence of co-channel interference, the user association algorithm is similar to the one described in Subsection \ref{sec:failedAssociation}, with the difference that in this case, in approximation of the power budget of each cluster which can be allocated to the failed users, we need to consider the co-channel interference effect. Similar to Section \ref{sec:PABefore}, we assume that prior to the outage event, BS $n \in \mathcal{N}$ optimal power allocation is done to maximize the sum of SE of its users while considering their minimum SE requirement as described in problem \eqref{eq:PABefor}. In co-channel interference scenario we have:
\small
\begin{equation}\label{eq:PABeforInt}
\begin{split}
 \displaystyle &\max_{\mathcal{P}^{'c}_n} \sum_{u \in \mathcal{U}^n} \sum_{\ell \in \mathcal{L}^n} \beta_{u,\ell,n} S_{u,\ell,n}^{'c},\\
&\text{subject to:}\\
&\text{\textbf{C1\hspace{1mm}:\hspace{1mm}}} \sum_{\ell \in \mathcal{L}^n} \beta_{u,\ell,n} S_{u,\ell,n}^{'c} \geq \mathfrak{s}^{\text{min}}_u, \forall u \in \mathcal{U}^n, \\
& \text{\textbf{C2\hspace{1mm}:\hspace{1mm}}} \big( \beta_{u,\ell,n} p^{'c}_{u,\ell,n} - \sum\limits_{k =1}^{u-1} \beta_{k,\ell,n} p^{'c}_{k,\ell,n} \big) H_{u-1,\ell,n}^{c}
 \geq\\
 &\hspace{20mm} \mathfrak{p}^{\text{tol}} - (1-\beta_{u,\ell,n})M, \hspace{2mm} 2 \leq u \leq U^n,\hspace{2mm}\forall \ell \in \mathcal{L}^n, \\
& \text{\textbf{C3\hspace{1mm}:\hspace{1mm}}} \sum_{\ell \in \mathcal{L}^n} \sum_{u \in \mathcal{U}^n} \beta_{u,\ell,n} p^{'c}_{u,\ell,n} \leq \mathfrak{p}^{\text{max}},\\
& \text{\textbf{C4\hspace{1mm}:\hspace{1mm}}} p^{'c}_{u,\ell,n} \geq 0, \hspace{2mm} \forall u \in \mathcal{U}^n, \hspace{2mm} \forall \ell \in \mathcal{L}^n,\\
&\text{\textbf{C5\hspace{1mm}:\hspace{1mm}}}: \zeta_{b,l',n'} \zeta_{b,l,n} \left(  \sum_{u \in \mathcal{U}^n} \beta_{u,\ell,n} p^{c}_{u,\ell,n} \right) \times \alpha_{u',\ell,n'} h^{f}_{u,n}\\
  &\hspace{47mm}\leq \zeta_{b,l',n'} \zeta_{b,l,n} \mathfrak{I}^{\text{max}},\\
    &\hspace{10mm} \forall b \in \mathcal{B}, \forall n \in \mathcal{N}, \forall n' \in \mathcal{D}_{n}, \forall \ell \in \mathcal{L}^n, \forall \ell' \in \mathcal{L}^{n'}, \forall u' \in \mathcal{U}^f \\
&\text{\textbf{C6\hspace{1mm}:\hspace{1mm}}}: \zeta_{b,l',n'} \zeta_{b,l,n} \left(  \sum_{u \in \mathcal{U}^n} \beta_{u,\ell,n} p^{c}_{u,\ell,n} \right) \times \beta_{u',\ell,n'} h^{c}_{u,n} \\
&\hspace{47mm} \leq \zeta_{b,l',n'} \zeta_{b,l,n} \mathfrak{I}^{\text{max}},\\
 &\hspace{10mm} \forall b \in \mathcal{B}, \forall n \in \mathcal{N}, \forall n' \in \mathcal{D}_{n}, \forall \ell \in \mathcal{L}^n, \forall \ell' \in \mathcal{L}^{n'}, \forall u' \in \mathcal{U}^c,
\end{split}
\end{equation}\normalsize
where
 $p^{'c}_{u,\ell,n}$ is the power that BS $n$ allocates to user $u$ which belongs to cluster $\ell \in \mathcal{L}^n$ at the time of failure, $\mathcal{P}^{'c}_n$ is a set whose elements are $p^{'c}_{u,\ell,n}$, and
  $$S_{u,\ell,n}^{'c} = \log_2\bigg(1 + \frac{p^{'c}_{u,\ell,n}h^{c}_{u,n}}{  h^{c}_{u,n}\sum\limits_{k =1}^{u-1} \beta_{k,\ell,n}p^{'c}_{k,\ell,n}  + \sigma^2 + |\mathcal{D}_{n}| \times \mathfrak{I}^{\text{max}}}\bigg).$$
Assume that $\mathcal{I}_{\ell, n} = \{i_1^{\ell , n}, i_2^{\ell , n}, \dots, i_q^{\ell , n} \}$ is the set of connected user indices in cluster $\ell$ of BS $n$ in a way that $h^{c}_{i_1^{\ell , n},n} \geq h^{c}_{i_2^{\ell , n},n} \geq \dots, h^{c}_{i_q^{\ell , n},n}$. From the constraint \textbf{C1} in Problem 3, it can be seen that the minimum amount of power that needs to be allocated to the user whose index is $i_1^{\ell ,n}$, can be determined on the basis of its QoS requirement as follows:
\begin{equation}\label{eq:minPowerMaxUserCoInt}
\begin{split}
  p_{i_1^{\ell ,n},\ell,n}^{c} &= \frac{\sigma^2 + |\mathcal{D}_{n}| \times \mathfrak{I}^{\text{max}}}{h_{i_1^{\ell ,n},n}^{c}}\left( 2^{\mathfrak{s}^{\text{min}}_{i_1^{\ell ,n}}} - 1\right), \\
  & \forall n \in \mathcal{N}, \hspace{2mm}\forall \ell \in \mathcal{L}^n.
\end{split}
\end{equation}
For the other connected users in cluster $\ell$ of BS $n$, the required amount of power can be determined according to the following proposition:

\noindent\textbf{Proposition 2}. Consider Problem 3 and \eqref{eq:PABeforInt}. With assumption that $\mathfrak{p}^{\text{tol}} << 1$, the amount of power that needs to be allocated to connected users after the outage event can be approximated as follows:
\begin{equation}\label{eq:appPowerConnectedInt}
\begin{split}
  p_{i_k^{\ell ,n},\ell,n}^{c} &\approx \frac{h_{i_k^{\ell ,n},n}\sum_{j = 1}^{k-1}p^{c}_{i_j^{\ell ,n},\ell,n} + \sigma^2 + |\mathcal{D}_{n}| \times \mathfrak{I}^{\text{max}}}{h_{i_k^{\ell ,n},n}\sum_{j = 1}^{k-1}p^{'c}_{i_j^{\ell ,n},\ell,n} + \sigma^2 + |\mathcal{D}_{n}| \times \mathfrak{I}^{\text{max}}}p_{i_k^{\ell ,n},\ell,n}^{'c},\\
   &\forall k \geq 2, n \in \mathcal{N}, \ell \in \mathcal{L}^n,
\end{split}
\end{equation}
where $ p_{i_1^{\ell ,n},\ell,n}^{c}$ can be calculated from \eqref{eq:minPowerMaxUserCoInt}.

\begin{proof}
If the power of connected users are determined based on \eqref{eq:appPowerConnectedInt},  the SE of connected users will be as follows:
\small
\begin{equation}\label{eq:SEAfetrInt}
\begin{split}
  S_{i_k^{\ell ,n},\ell,n}^{c} &= \log_2\bigg(1 + \frac{h^{c}_{i_{k^{\ell ,n},n}}p_{i_k^{\ell ,n},\ell,n}^{c}}{  h^{c}_{i_{k^{\ell ,n},n}}\sum\limits_{j =1}^{k-1} p^{c}_{k,\ell,n}  + \sigma^2 + |\mathcal{D}_{n}| \times \mathfrak{I}^{\text{max}}}\bigg)\\
  & = \log_2\bigg(1 + \frac{h^{c}_{i_{k^{\ell ,n},n}}p_{i_k^{\ell ,n},\ell,n}^{'c}}{h_{i_k^{\ell ,n},n}\sum_{j = 1}^{k-1}p^{'c}_{i_j^{\ell ,n},\ell,n} + \sigma^2 + |\mathcal{D}_{n}| \times \mathfrak{I}^{\text{max}}}\bigg)\\
 & = S_{i_{k^{\ell ,n}},\ell,n}^{'c}
\end{split}
\end{equation}
\normalsize
Since SE requirement for connected users is the same before  and after the outage event, from \eqref{eq:SEAfetrInt} we can conclude that the power allocation according to \eqref{eq:appPowerConnectedInt} satisfies the QoS requirement of the connected users.

\noindent Based on power allocation in \eqref{eq:minPowerMaxUserCoInt}, we have
\small
\begin{equation}\label{eq:proffCons}
\begin{split}
 p_{i_k^{\ell ,n},\ell,n}^{c}& - \sum_{j=1}^{k-1}p_{i_j^{\ell ,n},\ell,n}^{c}\\
 &= \frac{h_{i_k^{\ell ,n},n}\sum_{j = 1}^{k-1}p^{c}_{i_j^{\ell ,n},\ell,n} + \sigma^2 + |\mathcal{D}_{n}| \times \mathfrak{I}^{\text{max}}}{h_{i_k^{\ell ,n},n}\sum_{j = 1}^{k-1}p^{'c}_{i_j^{\ell ,n},\ell,n} + \sigma^2 + |\mathcal{D}_{n}| \times \mathfrak{I}^{\text{max}}}p_{i_k^{\ell ,n},\ell,n}^{'c}\\
 & \hspace{10mm} - \sum_{j=1}^{k-1}p_{i_j^{\ell ,n},\ell,n}^{c}\\
 &\stackrel{(a)}{\geq} \frac{h_{i_k^{\ell ,n},n}\sum_{j = 1}^{k-1}p^{c}_{i_j^{\ell ,n},\ell,n} + \sigma^2 + |\mathcal{D}_{n}| \times \mathfrak{I}^{\text{max}}}{h_{i_k^{\ell ,n},n}\sum_{j = 1}^{k-1}p^{'c}_{i_j^{\ell ,n},\ell,n} + \sigma^2 + |\mathcal{D}_{n}| \times \mathfrak{I}^{\text{max}}}\sum_{j=1}^{k-1}p_{i_j^{\ell ,n},\ell,n}^{'c} \\
 & \hspace{10mm}- \sum_{j=1}^{k-1}p_{i_j^{\ell ,n},\ell,n}^{c}\\
 & = \frac{(\sigma^2 + |\mathcal{D}_{n}| \times \mathfrak{I}^{\text{max}})(\sum_{j=1}^{k-1}p_{i_j^{\ell ,n},\ell,n}^{'c} - \sum_{j=1}^{k-1}p_{i_j^{\ell ,n},\ell,n}^{c})}{h_{i_k^{\ell ,n},n}\sum_{j = 1}^{k-1}p^{'c}_{i_j^{\ell ,n},\ell,n} + \sigma^2 + |\mathcal{D}_{n}| \times \mathfrak{I}^{\text{max}}}\\
 &\geq 0,
 \end{split}
\end{equation}
\normalsize
where $(a)$ follows from \textbf{C2} in \eqref{eq:PABeforInt}. with the assumption of $\mathfrak{p}^{\text{tol}} <<  1$, from \eqref{eq:proffCons} we can conclude that power approximation as \eqref{eq:appPowerConnectedInt} meets constraint \textbf{C2} in \eqref{eq:CointJointAfter}. Finally, due to the fact that co-channel interference threshold is the same before and after the outage, the power approximation in \eqref{eq:appPowerConnectedInt} does not deviate co-channel interference constraint which completes the proof.
\end{proof}

Having the required power of connected users at hand, we can determine the amount of power each cluster can allocate to a failed user as follows:
\begin{equation}\label{eq:culsterPowerBudgetCo}
\begin{split}
 \Delta_{\ell,n} &= \sum_{u \in \mathcal{U}^n}\beta_{u,\ell,n} p^{'c}_{u,\ell,n} - \sum_{u \in \mathcal{U}^n}\beta_{u,\ell,n} p^{c}_{u,\ell,n} ,\\
  &\hspace{15mm} \forall n \in \mathcal{N}, \hspace{2mm} \forall n \in \mathcal{N}, \ell \in \mathcal{L}^n,
 \end{split}
 \end{equation}
where $p^{c}_{u,\ell,n}$ are calculated based on \eqref{eq:minPowerMaxUserCoInt} and \eqref{eq:appPowerConnectedInt}.
Now from \eqref{eq:culsterPowerBudgetCo},  we can propose an efficient scheme for the association of failed users to the clusters of compensating cells as described in Algorithm \ref{alg:userAssociationCo} where
\small
\begin{equation}\label{eq:failedUserRateCo}
       \begin{split}
   S^{f}_{u,\ell,n} &=
    \log_2\bigg(1 + \frac{\Delta_{\ell,n}h^{f}_{u,n}}{  h^{f}_{u,n}\sum\limits_{k =1}^{U^n} \beta_{u,\ell,n} \delta_{\ell,n}p^{'c}_{k,\ell,n}  + |\mathcal{D}_{n}| \times \mathfrak{I}^{\text{max}} + \sigma^2}\bigg),\\
     &\hspace{10mm} \forall n \in \mathcal{N}, \hspace{2mm} \forall \ell \in \mathcal{L}^n, \hspace{2mm} \forall u\in U^f.
     \end{split}
 \end{equation}\normalsize
\begin{algorithm}[t]
	\caption{: Heuristic user association algorithm in presence of co-channel interference}\label{alg:userAssociationCo}
	Initialization: $\alpha_{u,\ell,n} = 0, \hspace{2mm} \forall u \in \mathcal{U}^f, n \in \mathcal{N}, \ell \in \mathcal{L}^n$.
	\begin{algorithmic}
		\ForAll{$n \in \mathcal{N}, \ell \in \mathcal{L}^n, u \in \mathcal{U}^f$ }
		\State Calculate $S^{f}_{u,\ell,n}$ using \eqref{eq:failedUserRateCo};
		\EndFor
        \ForAll{$i = 1, 2, \dots, U^f$}
        \State Determine $(u^*_i,\ell^*_i,n^*_i)$ according to \eqref{eq:trp};
        \State $\alpha_{u^*_i,\ell^*_i,n^*_i} = 1$;
        \State Update $S^{f}_{u,\ell,n}$ using \eqref{eq:SEUpdate1}-\eqref{eq:SEUpdate2};
        \EndFor
	\end{algorithmic}
\end{algorithm}

Given the user association scheme, i.e., $\mathcal{A}$, and the fact that constraints \textbf{C10} and \textbf{C11} are linear, Problem 3 remains convex. We follow the same procedures as in Subsections IV.B.1-5 to build and train our DNN.
The only difference is that the input to the DNN not only consists of channel gains between a BS and its users, but also the channel gains between a BS and users in the neighboring cells that operates in the same subchannel. Therefore, the new input layer has more number of neurons compared to the case where co-channel interference does not exist. To reduce the number of neurons in the input layer, we leverage the fact that when the co-channel interference that a BS generates on the neighboring user with the highest channel gain is lower than the threshold, the co-channel interference it generates on other neighboring users in the same subchannel also satisfies the constraint.

\section{Extend to Multi-Antenna Case}\label{sec:MIMO}
The multiple antennas at the transmitter (and/or receiver) side provide additional degrees of freedom for performance improvement \cite{7244171}. Hence, the application of MIMO in NOMA is of great interest and is well-discussed in the literature \cite{7236924}, \cite{7433470}. The main goal of this paper is proposing a method for cell outage compensation in NOMA scheme. Thus, without loss of generality and for simplicity, we considered SISO-NOMA. To complete the argument, in this section, we explain how in three steps our proposed scheme can be extended to MIMO-NOMA case.
\subsubsection{Model Adaptation}
First, let discuss how the problem formulation will change in MIMO-NOMA case. Consider the same network architecture as depicted in Figure \ref{fig:SystemModel} with the base stations, each of which is equipped with $N_{T_x}$ antennas at the transmitter side. The users are also equipped with $M_{R_x}$ antennas at the receiver side. Adopting the proposed framework in \cite{9093213}, we assume that each BS serves each of its clusters with one of its antennas. Therefore, the maximum number of clusters in each BS would be $N_{T_x}$. It should be noted that, with this assumption, we have no change at transmitter side on BS. However, we need to adapt the receive model at the users' side to MIMO.
Let $\textbf{y}_{u,n}$ denote the received signal at the $u$th user connected to BS $n$. It can be expressed as

\begin{equation}\label{eq:RxSig-MIMOcase}
  \text{\textbf{y}}_{u,n} = \text{\textbf{H}}_{u,n}\text{\textbf{P}}_{n}\text{\textbf{s}}_n+\text{\textbf{z}}_u,
\end{equation}
where $\textbf{H}_{u,n}\in\mathbb{C}^{M_{R_x}\times{N_{T_x}}}$ and $\textbf{P}_{n}\in{\mathbb{C}^{N_{T_x}\times{N_{T_x}}}}$ are the channel matrix between user $u$ and BS $n$ and the precoder matrix of BS $n$, respectively; $\text{\textbf{s}}_{n}\in{\mathbb{C}^{N_{T_x}\times1}}$ with the following definition is the transmitted vector containing the intended symbols of all clusters, and $\text{\textbf{z}}_{u}\in\mathbb{C}^{M_{R_x}\times1}$ is the noise vector,
\begin{equation}\label{eq:precoder-MIMOcase}
\begin{split}
  \text{\textbf{s}}_n = &\textit{\textbf{vec}}\left(m_l\right)_{l=1}^{T_x},\hspace{10.0mm}m_l = \sum_{u=1}^{U^n}{p_{l,u}}\times{s_{l,u}},\\
  &\forall n\in\mathcal{N},\hspace{10.0mm}\forall l\in\mathcal{L}^n,
\end{split}
\end{equation}
where $m_l$ is the aggregated transmitted signal to users in cluster $l$, $p_{l,u}$ is the power allocated to users $u$ of cluster $l$, and, $s_{l,u}$ is the transmitted symbol to $u$th user of cluster $l$ with normalized power.
As it can be seen from \eqref{eq:RxSig-MIMOcase}, each user receives a combination of all transmitted signals by $M_{R_x}$ antennas. It is well-known in literature that the Maximal-Ratio Combining (MRC) with appropriate design of precoder matrix $\text{\textbf{P}}_n$ and its corresponding detector at the receiver side is optimal \cite{7244171}. Using \eqref{eq:RxSig-MIMOcase}, \eqref{eq:precoder-MIMOcase} and mathematically rephrasing our analysis above, we will have the extended version of the problem formulation in MIMO case.

\subsubsection{Compensation in MIMO-NOMA case}
In the second step, we will show that how our proposed solution will be applicable to the cell outage compensation problem in MIMO-NOMA scenario. The proposed user association algorithm is based on the answer to this question that
how large the SE of a failed user would be if it is served by a candidate compensatory cell, knowing the extra amount of power of each cell while being able to fulfill the QoS requirements of its connected users. We have the similar arguments for the user association algorithm in MIMO case with no changes.
For the NN-based power allocation method, in MIMO-NOMA case, the input variables of the NN of cell $n$ would be the channel matrices and the precoder matrix ${\text{\textbf{H}}_{u,n}}$ and $\textbf{P}_n$, $\forall{u\in\mathcal{U}^n},\forall{n\in\mathcal{N}}$, respectively, and the output is the power allocation coefficients $p_{l,u}, \forall{l\in\mathcal{L}^n},\forall{u\in\mathcal{U}^n}$.

\subsubsection{Model Training}
The third step toward developing the proposed scheme for MIMO-NOMA is the training process of the neural network. For this purpose, we adopt the two-step training method proposed in \cite{9093213}. In this method, the network is initially trained based on an AWGN channel and using water-filling algorithm and then it is regarded as a basic model that can be trained for representing another channel type using transfer learning \cite{9336290}. In the secondary phase and in order to adapt to new environment samples, an online learning policy for power allocation is proposed.\\
Using the three steps explained above, our proposed method for cell compensation in SISO-NOMA and its corresponding analysis can be extended to MIMO-NOMA case.

\section{Performance Evaluation}\label{sec:simulation}
In this section, we first present the complexity analysis of our proposed method and that of the mathematical-based approach. Then, the simulation results are provided in the next subsection.
\subsection{Complexity Analysis} \label{sec:complexity}
Here, we analyze the computational complexity of the proposed suboptimal approach for Problem \ref{pr:joint} and compare it with that of the mathematical-based solution method. To obtain the global optimal solution, we need to examine all possible user association schemes and solve the corresponding convex optimization problem (Problem \ref{pr:power}) for all BSs that serve at least one failed user. Let $L$ be the total number of clusters in the compensating cells, i.e., $L = \sum_{n \in \mathcal{N}} L^n$. The number of possible association schemes is $\mathbb{P}(L,U^f)$ where $\mathbb{P}(x,y)$ is the number of $y$ permutations of $x$ objects.
Problem \ref{pr:power} is convex and can be solved using the interior point method. The worst-case complexity of the interior point method for this problem is $O((U^f + \sum_{n \in \mathcal{N}} U^n)^3)$ \cite{boyd2004convex}.
Therefore, the overall complexity of finding the global optimal solution is $O(\mathbb{P}(L,U^f) (U^f + \sum_{n \in \mathcal{N}} U^n)^3)$. It should be noted that the optimization-based method is completely performed in real time and only suffers from online complexity. As mentioned above, this makes the method inapplicable for cell outage compensation, which is inherently time sensitive.

Our proposed method includes two sequential steps: user association and DNN-based power allocation. The proposed failed user association algorithm consists of computing and sorting $L\times U^f$ numbers. Hence, its order of complexity is $O(L\times U^f \log(L\times U^f))$.
The computational complexity of the DNN-based solution method can be divided into the following two stages:
\subsubsection{Offline complexity}
Offline computational complexity involves generating the labeled dataset and training the DNN. Generating each labeled sample of the dataset requires solving Problem \ref{pr:power} for a realization of the channel gains with worst-case complexity $O((U^f + \sum_{n \in \mathcal{N}} U^n)^3)$. Assuming that we have $K_S$ number of samples in the dataset, the overall complexity of this step is $ O(K_S(U^f + \sum\limits_{n \in \mathcal{N}} U^n)^3)$.
\subsubsection{Online complexity}
Assume that the DNN has $M$ layers (including hidden and output layers), each layer $m$ having $D^{m}$ number of neurons. Given any input, the output of the DNN can be obtained by calculating $\sum_{m=1}^{M} D^{m-1} D^{m}$ number of multiplications, $\sum_{m=1}^{M} D^{m}$ number of summations, and $\sum_{m=1}^{M} D^{m}$ number of calculations of activation functions. In the online step, the DNN architecture is fixed and the number of calculations is constant. Therefore, the complexity of DNN is $O(1)$. Since the DNN needs to be performed for $N$ number of compensating cells, the overall complexity of this step is $O(N)$.

To summarize, the optimization-based method has worst-case exponential complexity, while the complexity of our proposed method is in the order of polynomial basis. It should also be pointed out that offline computational complexity can be afforded, while real-time developments have constrained resources.
\subsection{Simulation Results} \label{sec:results}
In this subsection, we evaluate the performance of the proposed solution through simulations. We consider a network consisting of one BS, which experiences failure, and its compensating cells. The cells radius is $120$ m. Users are randomly and uniformly placed in each cell where the distance between each user and BS is at least $20$ m. The path loss at a distance $d$ m from the BS is modeled as  $38 + 30\text{log}_{10}(d)$ dB \cite{ituProp}. The simulation setup parameters are summarized in Table \ref{tab:simulationSetup}.
\begin{table}[t]
  \centering
  \caption{Simulation Parameters}
  \label{tab:simulationSetup}
  \begin{tabular}{|| l | l ||}
  \hline
  \textbf{Parameter} & \textbf{Value}  \\
  \hline \hline
  $\mathfrak{p}^{\text{max}}$ & 46.02 dBm \\ \hline
  $\mathfrak{p}^{\text{tol}}$ & -101.4 dBm \cite{ali2016dynamic}\\ \hline
  $\sigma^2$  & -150 dBm\\ \hline
  $\mathfrak{s}^{\text{min}}_u$ & 4 bit/sec/Hz\\ \hline
  Path loss model & $38 + 30\text{log}_{10} d$ dB \cite{ituProp}\\
  \hline
\end{tabular}
\end{table}
For the DNN-based power control, we use a DNN with one input layer, three hidden layers, and one output layer. Each hidden layer has 200 neurons. Input to the DNN are the channel coefficients of users (connected and failed) in one BS. The output of the network is the power that has been assigned to each user and its size is the same as the input. The proposed DNN is implemented in Python 3.2.2 using Keras library \cite{chollet2018keras} with tensorflow \cite{abadi2016tensorflow} as back-end. We perform cross-validation to find the best hyperparameters for the training process. Unless otherwise noted, the parameters selected for our DNN are shown in Table \ref{tab:DNNparam}.

To generate the dataset, we use an off-the-shelf solver for Problem \ref{pr:power} in MATLAB. Moreover, we exploit the permutation invariance property of the input data in order to augment the training dataset and reduce the time of generating the labeled dataset.

For the sake of brevity, we use the following abbreviations to refer to the different approaches:
\begin{itemize}
  \item LC\_NOC refers to the proposed method, which is a low complexity algorithm for the NOMA-based outage compensation. This is fully discussed in Section \ref{sec:solution}.
  \item OPT\_NOC corresponds to the global optimal solution of Problem \ref{eq:JointAfter}.
  \item DNN\_PA refers to the proposed DNN for the power allocation problem discussed in Subsection \ref{sec:PA}.
  \item OPT\_PA refers to the optimal solution of power allocation Problem \ref{pr:power}.
  \item No\_OC corresponds to the case where no compensation technique is used after the cell outage.
\end{itemize}
\begin{table}[!t]
\caption{Parameters of DNN\_PA}
\label{param}
\centering
 \begin{tabular}{|| l | l ||}
 \hline
\textbf{Parameter} & \textbf{Value}\\ [0.5ex]
 \hline\hline
     Batch Size    &  128              \\ \hline
     Learning Rate &   0.0005           \\ \hline
     Decay Rate    &   0.9  \\ \hline
     Number of samples in whole dataset     &   130,000         \\ \hline
     Percentage of whole samples in training set     &   70\%          \\ \hline
     Percentage of whole samples in validation set     &   15\%         \\ \hline
     Percentage of whole samples in test set     &   15\%     \\ \hline
    \end{tabular}
    \label{tab:DNNparam}
\end{table}

\subsubsection{Effect of the batch size and learning rate}
In this experiment, we illustrate the effect of batch size and learning rate on the mean square error of DNN\_PA for the validation dataset. As we can see from Figure \ref{fig:batch}, smaller batch sizes achieve better stability and generalization performance, leading to lower MSE for the validation dataset. However, smaller batch sizes incur a greater computational cost since the optimization should be performed for more iterations during an epoch. We choose a batch size of 128 for our DNN to address this trade-off. For this batch size, we vary the learning rate to evaluate its effect on convergence. As Figure \ref{fig:LR} shows, large learning rates result in divergent behavior and small learning rates require more steps to converge. In fact, when the learning rate is small, weights and biases are updated accordingly and the loss decreases at a shallow rate. However, increasing the learning rate may cause the learning to jump over the minimum. For our DNN, a learning rate of 0.0005 is selected to address this trade-off.
 \begin{figure}[!t]
	\centering
	\centerline{\includegraphics[width=0.9\linewidth]{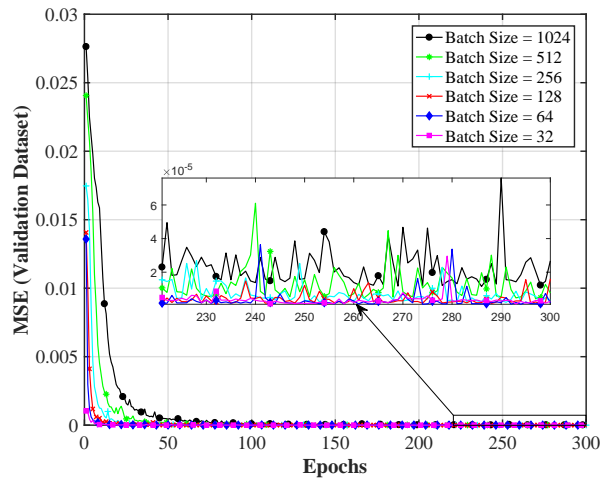}}
	\caption{Effect of the batch size on the mean square error.}
	\label{fig:batch}
\end{figure}
 \begin{figure}[!t]
	\centering
	\centerline{\includegraphics[width=0.9\linewidth]{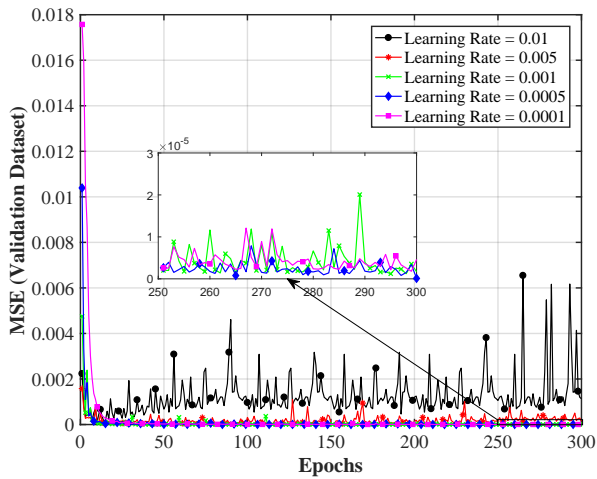}}
	\caption{Effect of the learning rate on the mean square error.}
    \label{fig:LR}
\end{figure}

\subsubsection{Performance evaluation of our proposed method}
In this experiment, we investigate the average SE of failed users in the solution of LC\_NOC and compare the results with those of OPT\_NOC. Since solving Problem \ref{eq:JointAfter} is very time consuming, we limit the number of compensating cells to $N=2$ and $N=3$ in this experiment, each of which serves  $U^n = 4$ connected users. The average SE of failed users is reported in Figure \ref{fig:comparison}. As we can see, the average SE of LC\_NOC is very close to that of the optimal solution. This suggests that our assumptions in the proposed algorithms are not restrictive and that the DNN is well trained to allocate near optimal power to compensating BSs.
\begin{figure}[!t]
	\centering
	\centerline{\includegraphics[width=0.95\linewidth]{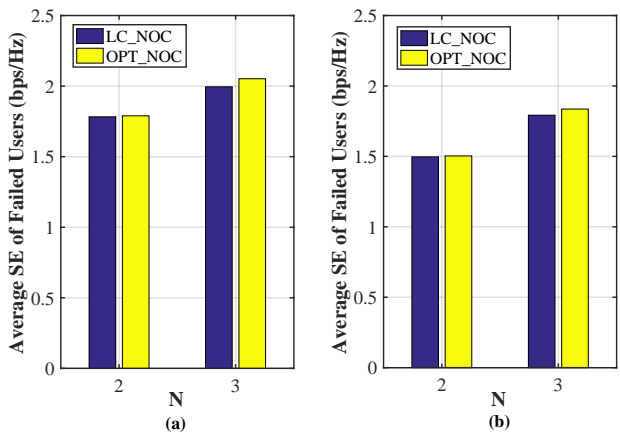}}
	\caption{Average SE of the failed users of LC\_NOC method compared to the OPT\_NOC method, (a): $U^f=3$, (b): $U^f=4$.}
	\label{fig:comparison}
\end{figure}


\subsubsection{Fairness evaluation of our proposed method}
In this experiment, LC\_NOC is compared to No\_OC. We vary the number of failed users and present values of the Jain's fairness (JF) index and average SE of all users, including both connected and failed users in Figure \ref{fig:noCompensation}. Jain's fairness index is defined as follows \cite{jain1984quantitative}:\small
\begin{equation*}
\displaystyle \text{JF} = \frac{(\sum\limits_{n \in \mathcal{N}}\sum\limits_{u \in \mathcal{U}^n} \text{SE}_{u,n}^c + \sum\limits_{n \in \mathcal{N}}\sum\limits_{u \in \mathcal{U}^f} \text{SE}_{u,n}^f )^2}{(U^f + \sum\limits_{n \in \mathcal{N}}U^n) \bigg(\sum\limits_{n \in \mathcal{N}}\sum\limits_{u \in \mathcal{U}^n} (\text{SE}_{u,n}^c)^2 + \sum\limits_{n \in \mathcal{N}}\sum\limits_{u \in \mathcal{U}^f} (\text{SE}_{u,n}^f)^2\bigg) },
\end{equation*}\normalsize
where $\text{SE}_{u,n}^c $and $\text{SE}_{u,n}^f$ are the achieved SE of connected user $u$ and failed user $u$ if it is served by BS $n$, respectively. If user $u$ is not served by any BS, its corresponding achieved SE is zero. As Figure \ref{fig:fairness} shows, LC\_NOC provides much higher fairness. Besides, using LC\_NOC, a greater number of users in the network are served. As expected, Figure \ref{fig:AverageSE} shows that the average SE of LC\_NOC is lower due to the fact that a portion of resources of compensating cells is now allocated to the failed users with less channel gain. However, we observe that the reduction in average SE is at most 5\%.

\begin{figure}
    \centering
    \subfigure[]
    {
        \includegraphics[width=0.9\linewidth]{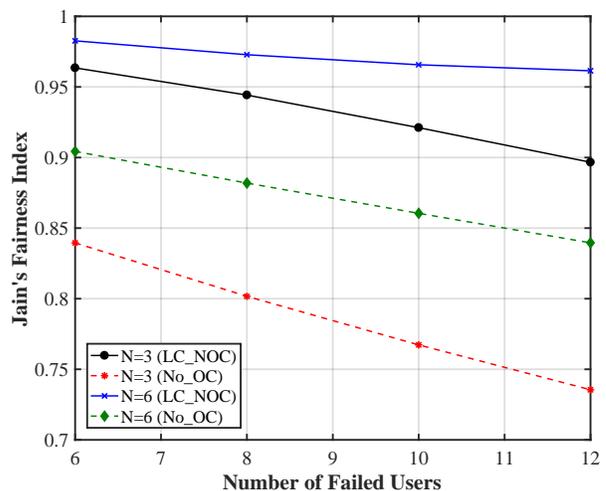}
        \label{fig:fairness}
    }
    \\
    \subfigure[]
    {
        \includegraphics[width=0.9\linewidth]{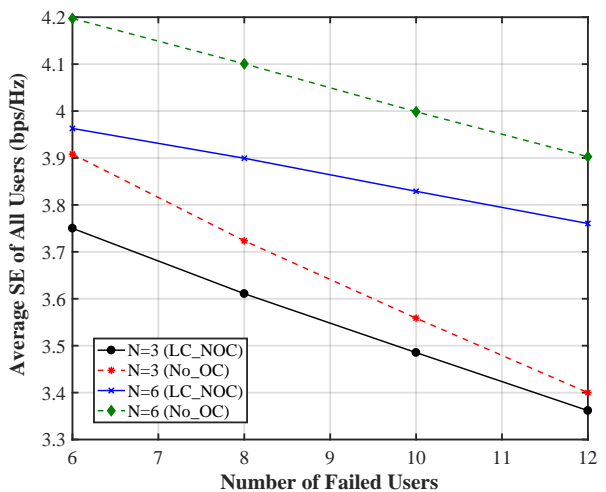}
        \label{fig:AverageSE}
    }
    \caption{Comparison of the LC\_NOC method to the No\_OC method, (a) Fairness vs. the number of failed users, (b) Average SE vs. the number of failed users.}
    \label{fig:noCompensation}
\end{figure}

\subsubsection{Effect of the number of failed users}
In this experiment, we evaluate the effect of $U^f$ and $N$ on the average SE of failed users. We also compare the results of DNN\_PA with OPT\_PA. Figure \ref{fig:N_FUE} shows that the average SE of failed users decreases with the increase in the number of failed users. This is due to the fact that when the number of failed users is large, the remaining resources are shared among a higher number of failed users after providing all connected users with their minimum data rate requirement. Comparing the results between our method and the optimal one shows that DNN\_PA can approach the average SE in proximity of the optimization-based method.

It should be noted that in case of a large number of failed users, an appropriate admission control policy is required with a view to maximizing the number of users served in the network. Admission control is beyond the scope of this paper.
 \begin{figure}[!t]
	\centering
	\centerline{\includegraphics[width=0.9\linewidth]{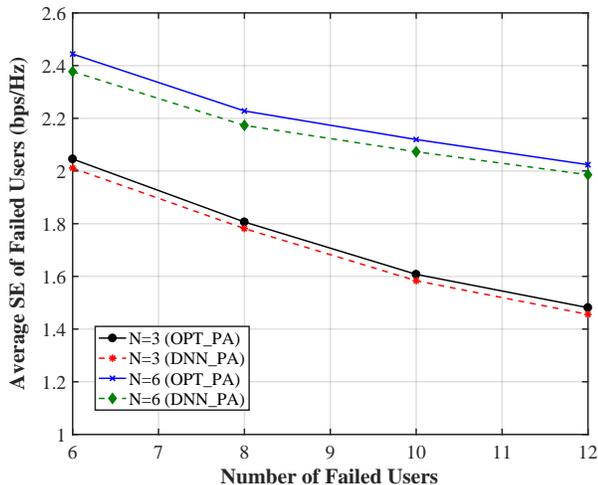}}
	\caption{Effect of the number of failed users on the average SE of failed users as a function of number of compensating base stations ($N$).}
	\label{fig:N_FUE}
\end{figure}

\subsubsection{Constraint satisfaction of the DNN-based method}
In this subsection, we conduct an experiment to investigate the satisfaction of the constraints in Problem \ref{pr:power} with DNN\_PA. We test the model on the unseen test dataset with 8,600 samples and evaluate the results. Constraints \textbf{C1}, \textbf{C2}, and \textbf{C4} are satisfied for all users.

For constraint \textbf{C3}, we calculate the relative error between the achieved SE of each connected user and its minimum SE requirement, i.e., $\mathfrak{s}^{\text{min}}_u$. The results are depicted in Figure \ref{fig:constraint}, which shows that a relative error of less than 0.01 for 98.9 percent of users whose achieved SE deviate from their minimum requirement.

 It should be noted that these constraints prevent the optimization-based methods from having a closed-form solution and result in a long execution time. On the other hand, the DNN-based method greatly reduces the online complexity, as discussed in Section \ref{sec:complexity}, with very little constraint violation. It is also worth noting that there are some methods for improving the DNN in this area, such as penalizing the constraint violation in the loss function during the training of neural network or implementing a cascading DNN with more than one neural network. However, an investigation of these methods will be left for future work.

\begin{figure}[!t]
	\centering
	\centerline{\includegraphics[width=0.9\linewidth]{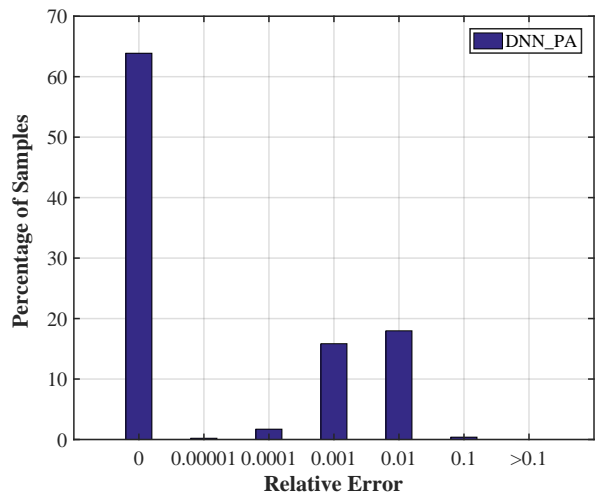}}
	\caption{Distribution of deviation from constraint \textbf{C3}.}
	\label{fig:constraint}
\end{figure}

\subsubsection{Average runtime performance}
In this experiment, we compared the average runtime of LC\_NOC and that of OPT\_NOC. We conducted our simulations on a computer with a 2.60 GHz Intel Core i7 CPU and 16 GB of RAM. We assumed that $U_n$ was 12 for each $n \in \mathcal{N}$ and $N=3$. Tables \ref{tab:Online} and \ref{tab:Offline} show the average offline and online runtime required for each of these methods, respectively. To calculate online runtime for OPT\_NOC, we first computed the average runtime of Problem \ref{pr:power}, and then multiplied it by the number of possible associations.
\\
As mentioned above, LC\_NOC needs offline dataset generation and DNN training. In general, DNN-based algorithms suffer from resource constraints in the offline phase of implementation. However, since powerful processors can be utilized for offline procedure, the runtime of this phase is not challenging. Table \ref{tab:Online} shows that LC\_NOC works efficiently in real time even when the size of the network increases, while the runtime of OPT\_NOC increases exponentially in relation to the size of the network. Therefore, proposing algorithms with high performances (close to the optimal solutions) and short runtime is of essential for real scenarios. The above simulations demonstrate that the online DNN runtime is short. Since DNN complexity is $O(N)$ and $N$ is assumed to be fixed in this experiment, the runtime remains constant.

\begin{table}[!t]
\caption{Online runtime}
\label{param}
\centering
 \begin{tabular}{|| c || c |  c ||  c ||}
 \hline
\multirow{2}{*}{ $U^f$}  &  \multicolumn{2}{ c||}{ \small LC\_NOC} & \multirow{2}{*}{ \text{\small OPT\_NOC}} \\ [0.5ex]
 \cline{2-3}
 & \small Heuristic Algorithm & \small DNN & \\  \hline
     \small 12 & \small 0.016255 s &\small 0.015625 ms & \small 2.5 $\times$ 10\textsuperscript{15} s \\  \hline
     \small 8  & \small 0.013926 s & \small 0.015625 ms & \small 1.4 $\times$ 10\textsuperscript{10} s \\  \hline
     \small 4  & \small 0.011790 s & \small 0.015625 ms & \small 5.9 $\times$ 10\textsuperscript{5} s \\  \hline
    \end{tabular}
    \label{tab:Online}
\end{table}

\begin{table}[!t]
\caption{Offline runtime}
\label{param}
\centering
 \begin{tabular}{|| c |  c ||  c ||}
 \hline
 \multicolumn{2}{|| c||}{\small LC\_NOC} & \multirow{2}{*}{\small \text{OPT\_NOC}} \\
 \cline{1-2}  \cline{1-2}
\small \text{Dataset Generation} & \small \text{Neural Network Training} &  \\ [0.5ex]
 \hline
   \small
   28.62 hours & 14.45 min & 0  \\ \hline
    \end{tabular}
    \label{tab:Offline}
\end{table}

\subsubsection{Fairness evaluation in the presence of co-channel interference}

We conducted an experiment to evaluate our proposed solution in presence of co-channel interference discussed in Subsection \ref{sec:cochannelInt}. For DNN, we use the same architecture as discussed in Subsection \ref{sec:results} except for the input layer. A dataset containing 100,000 samples is generated using an off-the-shelf solver for Problem 3 (given the user association matrix) in MATLAB. After data preprocessing, the DNN is trained with batch learning method and is tested using the test dataset.
\\
We vary the number of failed users and evaluate its effect on Jain's fairness index. In this experiment, the algorithm presented in Subsection \ref{sec:cochannelInt} is compared against the case where no compensation technique is used after outage. As can be seen from Figure \ref{fig:coInt}, the trend is the same as Figure \ref{fig:fairness}, and our algorithm shows higher fairness. However, the average SE is lower due to the co-channel interference which can be improved with an efficient channel allocation algorithm. The problem of channel allocation is beyond the scope of this paper and is left to future work.
\begin{figure}[!t]
	\centering
	\centerline{\includegraphics[width=0.9\linewidth]{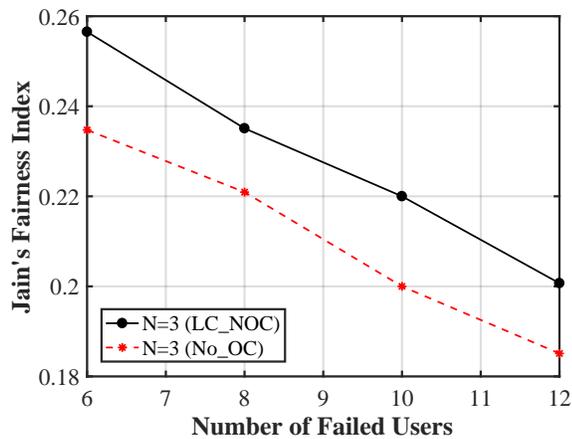}}
	\caption{Jain's fairness index evaluation in presence of co-channel interference.}
	\label{fig:coInt}
\end{figure}

\section{Conclusion}\label{sec:conclusion}
In this work, we proposed a newly scheme for cell outage compensation in NOMA-based systems. The scheme aims to serve users in the outage zone by surrounding cells in a way that maximize the SE of failed users while providing connected users  with a predefined level of QoS. We formulated the compensation process as a joint failed user association and power allocation problem that was NP-hard. An innovative, low complexity, suboptimal solution was proposed, where we associated the users in the outage cell with a heuristic algorithm and allocated their transmit power through a DNN-based approach. The proposed algorithm was shown to significantly improve the computational complexity, i.e., polynomial order with respect to the exponential complexity of finding an optimal solution. Simulation results demonstrated that the performance of the proposed algorithm approached the optimal solution. The next steps in this research could involve investigating the effect of other medium access procedures including but not limited to admission control policy on the performance of the network in cell outage compensation processes.

\bibliographystyle{IEEEtran}
\bibliography{refs}

\begin{thebibliography}{10}
\providecommand{\url}[1]{#1}
\csname url@samestyle\endcsname
\providecommand{\newblock}{\relax}
\providecommand{\bibinfo}[2]{#2}
\providecommand{\BIBentrySTDinterwordspacing}{\spaceskip=0pt\relax}
\providecommand{\BIBentryALTinterwordstretchfactor}{4}
\providecommand{\BIBentryALTinterwordspacing}{\spaceskip=\fontdimen2\font plus
\BIBentryALTinterwordstretchfactor\fontdimen3\font minus
  \fontdimen4\font\relax}
\providecommand{\BIBforeignlanguage}[2]{{%
\expandafter\ifx\csname l@#1\endcsname\relax
\typeout{** WARNING: IEEEtran.bst: No hyphenation pattern has been}%
\typeout{** loaded for the language `#1'. Using the pattern for}%
\typeout{** the default language instead.}%
\else
\language=\csname l@#1\endcsname
\fi
#2}}
\providecommand{\BIBdecl}{\relax}
\BIBdecl

\bibitem{TS32500}
3GPP, ``{TS 32.500}: Telecommunication management; self-organising networks
  ({SON}): concepts and requirements,'' V8.0.0, Dec. 2008.

\bibitem{aliu2012survey}
O.~G. Aliu, A.~Imran, M.~A. Imran, and B.~Evans, ``{A survey of self
  organisation in future cellular networks},'' \emph{IEEE Communications
  Surveys \& Tutorials}, vol.~15, no.~1, pp. 336--361, 2012.

\bibitem{TS32541}
3GPP, ``{TS 32.541}: Telecommunication management; self-organizing networks
  ({SON}); self-healing concepts and requirements,'' V10.0.0, Mar. 2011.

\bibitem{islam2016power}
S.~R. Islam, N.~Avazov, O.~A. Dobre, and K.-S. Kwak, ``{Power-domain
  non-orthogonal multiple access ({NOMA}) in {5G} systems: Potentials and
  challenges},'' \emph{IEEE Communications Surveys \& Tutorials}, vol.~19,
  no.~2, pp. 721--742, 2016.

\bibitem{saito2013non}
Y.~Saito, Y.~Kishiyama, A.~Benjebbour, T.~Nakamura, A.~Li, and K.~Higuchi,
  ``{Non-orthogonal multiple access ({NOMA}) for cellular future radio
  access},'' in \emph{IEEE Vehicular Technology Conference ({VTC})}, Dresden,
  Germany, Jun. 2013, pp. 1--5.

\bibitem{kannan1978computational}
R.~Kannan and C.~L. Monma, ``On the computational complexity of integer
  programming problems,'' in \emph{Optimization and Operations Research}.\hskip
  1em plus 0.5em minus 0.4em\relax Springer, 1978, pp. 161--172.

\bibitem{amirijoo2011effectiveness}
M.~Amirijoo, L.~Jorguseski, R.~Litjens, and R.~Nascimento, ``{Effectiveness of
  cell outage compensation in {LTE} networks},'' in \emph{IEEE Consumer
  Communications and Networking Conference ({CCNC})}, Las Vegas, NV, USA, Jan.
  2011, pp. 642--647.

\bibitem{de2017improving}
I.~de~la Bandera, P.~Mu{\~n}oz, I.~Serrano, and R.~Barco, ``{Improving cell
  outage management through data analysis},'' \emph{IEEE Wireless
  Communications}, vol.~24, no.~4, pp. 113--119, 2017.

\bibitem{qin2018machine}
M.~Qin, Q.~Yang, N.~Cheng, H.~Zhou, R.~R. Rao, and X.~Shen, ``{Machine learning
  aided context-aware self-healing management for ultra dense networks with
  {Q}o{S} provisions},'' \emph{IEEE Transactions on Vehicular Technology},
  vol.~67, no.~12, pp. 12\,339--12\,351, 2018.

\bibitem{lee2013cobra}
K.~Lee, H.~Lee, Y.-U. Jang, and D.-H. Cho, ``{CoBRA}: Cooperative
  beamforming-based resource allocation for self-healing in {SON}-based indoor
  mobile communication system,'' \emph{IEEE Transactions on Wireless
  Communications}, vol.~12, no.~11, pp. 5520--5528, 2013.

\bibitem{lee2015low}
K.~Lee, H.~Lee, and D.-H. Cho, ``On the low-complexity resource allocation for
  self-healing with reduced message passing in indoor wireless communication
  systems,'' \emph{IEEE Transactions on Wireless Communications}, vol.~15,
  no.~3, pp. 2080--2089, 2015.

\bibitem{selim2018short}
M.~Y. Selim, A.~Alsharoa, and A.~E. Kamal, ``Short-term and long-term cell
  outage compensation using {UAV}s in {5G} networks,'' in \emph{Global
  Communications Conference ({GLOBECOM})}, Abu Dhabi, United Arab, Dec. 2018,
  pp. 1--6.

\bibitem{onireti2015cell}
O.~Onireti, A.~Zoha, J.~Moysen, A.~Imran, L.~Giupponi, M.~A. Imran, and
  A.~Abu-Dayya, ``{A cell outage management framework for dense heterogeneous
  networks},'' \emph{IEEE Transactions on Vehicular Technology}, vol.~65,
  no.~4, pp. 2097--2113, 2015.

\bibitem{guo2019deep}
J.~Guo, Z.~Wang, X.~Shi, X.~Yang, P.~Yu, L.~Feng, and W.~Li, ``{A deep
  reinforcement learning based mechanism for cell outage compensation in
  massive {IoT} environments},'' in \emph{IEEE International Wireless
  Communications \& Mobile Computing Conference ({IWCMC})}, Tangier, Morocco,
  Morocco, Jun. 2019, pp. 284--289.

\bibitem{chen2021cell}
C.~Chen, P.~Yu, L.~Feng, F.~Zhou, W.~Li, and W.~Lin, ``A cell outage
  compensation mechanism based on {IPSO} for {5G} ultra-dense small cell
  networks,'' in \emph{26th IEEE International Workshop on Computer Aided
  Modeling and Design of Communication Links and Networks ({CAMAD})}, Porto,
  Portugal, Dec. 2021, pp. 1--6.

\bibitem{jie2020comp}
Y.~Jie, A.~Alsharoa, A.~E. Kamal, and M.~Alnuem, ``A comp-based outage
  compensation solution for heterogeneous femtocell networks,'' \emph{Computer
  Networks}, vol. 168, p. 107061, 2020.

\bibitem{ali2016dynamic}
M.~S. Ali, H.~Tabassum, and E.~Hossain, ``{Dynamic user clustering and power
  allocation for uplink and downlink non-orthogonal multiple access ({NOMA})
  systems},'' \emph{IEEE Access}, vol.~4, pp. 6325--6343, 2016.

\bibitem{celik2019distributed}
A.~Celik, M.-C. Tsai, R.~M. Radaydeh, F.~S. Al-Qahtani, and M.-S. Alouini,
  ``Distributed user clustering and resource allocation for imperfect {NOMA} in
  heterogeneous networks,'' \emph{IEEE Transactions on Communications},
  vol.~67, no.~10, pp. 7211--7227, 2019.

\bibitem{ma2017power}
X.~Ma, J.~Wu, Z.~Zhang, Z.~Zhang, X.~Wang, X.~Chai, L.~Dai, and X.~Dai, ``Power
  allocation for downlink of non-orthogonal multiple access system via genetic
  algorithm,'' in \emph{International Conference on 5G for Future Wireless
  Networks}.\hskip 1em plus 0.5em minus 0.4em\relax Springer, 2017, pp.
  459--470.

\bibitem{xiao2018improved}
H.~Xiao, Y.~Wang, Q.~Cheng, and Y.~Wang, ``An improved {PSO-Based} power
  allocation algorithm for the optimal {EE} and {SE} tradeoff in downlink
  {NOMA} systems,'' in \emph{29th IEEE Annual International Symposium on
  Personal, Indoor and Mobile Radio Communications (PIMRC)}, Bologna, Italy,
  Sept. 2018, pp. 1--5.

\bibitem{sutton2018reinforcement}
R.~S. Sutton and A.~G. Barto, \emph{Reinforcement {Learning}: An
  {Introduction}}.\hskip 1em plus 0.5em minus 0.4em\relax MIT Press, 2018.

\bibitem{zappone2018user}
A.~Zappone, L.~Sanguinetti, and M.~Debbah, ``User association and load
  balancing for massive {MIMO} through deep learning,'' in \emph{$52^{nd}$IEEE
  Asilomar Conference on Signals, Systems, and Computers}, Pacific Grove, CA,
  USA, Oct. 2018, pp. 1262--1266.

\bibitem{liang2019towards}
F.~Liang, C.~Shen, W.~Yu, and F.~Wu, ``Towards optimal power control via
  ensembling deep neural networks,'' \emph{IEEE Transactions on
  Communications}, vol.~68, no.~3, pp. 1760--1776, 2019.

\bibitem{sun2018learning}
H.~Sun, X.~Chen, Q.~Shi, M.~Hong, X.~Fu, and N.~D. Sidiropoulos, ``Learning to
  optimize: Training deep neural networks for interference management,''
  \emph{IEEE Transactions on Signal Processing}, vol.~66, no.~20, pp.
  5438--5453, 2018.

\bibitem{yang2019deep}
N.~Yang, H.~Zhang, K.~Long, H.-Y. Hsieh, and J.~Liu, ``Deep neural network for
  resource management in {NOMA} networks,'' \emph{IEEE Transactions on
  Vehicular Technology}, vol.~69, no.~1, pp. 876--886, 2019.

\bibitem{ahmed2019deep}
K.~I. Ahmed, H.~Tabassum, and E.~Hossain, ``Deep learning for radio resource
  allocation in multi-cell networks,'' \emph{IEEE Network}, vol.~33, no.~6, pp.
  188--195, 2019.

\bibitem{alkhateeb2018deep}
A.~Alkhateeb, S.~Alex, P.~Varkey, Y.~Li, Q.~Qu, and D.~Tujkovic, ``Deep
  learning coordinated beamforming for highly-mobile millimeter wave systems,''
  \emph{IEEE Access}, vol.~6, pp. 37\,328--37\,348, 2018.

\bibitem{mohamed2015control}
A.~Mohamed, O.~Onireti, M.~A. Imran, A.~Imran, and R.~Tafazolli, ``Control-data
  separation architecture for cellular radio access networks: A survey and
  outlook,'' \emph{IEEE Communications Surveys \& Tutorials}, vol.~18, no.~1,
  pp. 446--465, 2015.

\bibitem{xu2013functionality}
X.~Xu, G.~He, S.~Zhang, Y.~Chen, and S.~Xu, ``On functionality separation for
  green mobile networks: concept study over {LTE},'' \emph{IEEE Communications
  Magazine}, vol.~51, no.~5, pp. 82--90, 2013.

\bibitem{maraqa2020survey}
O.~Maraqa, A.~S. Rajasekaran, S.~Al-Ahmadi, H.~Yanikomeroglu, and S.~M. Sait,
  ``A survey of rate-optimal power domain {NOMA} with enabling technologies of
  future wireless networks,'' \emph{IEEE Communications Surveys \& Tutorials},
  vol.~22, no.~4, pp. 2192--2235, 2020.

\bibitem{wei2017optimal}
Z.~Wei, D.~W.~K. Ng, J.~Yuan, and H.-M. Wang, ``Optimal resource allocation for
  power-efficient {MC-NOMA} with imperfect channel state information,''
  \emph{IEEE Transactions on Communications}, vol.~65, no.~9, pp. 3944--3961,
  2017.

\bibitem{ali2018downlink}
M.~S. Ali, E.~Hossain, A.~Al-Dweik, and D.~I. Kim, ``Downlink power allocation
  for {CoMP-NOMA} in multi-cell networks,'' \emph{IEEE Transactions on
  Communications}, vol.~66, no.~9, pp. 3982--3998, 2018.

\bibitem{hornik1989multilayer}
K.~Hornik, M.~Stinchcombe, H.~White \emph{et~al.}, ``Multilayer feedforward
  networks are universal approximators.'' \emph{Neural {Networks}}, vol.~2,
  no.~5, pp. 359--366, 1989.

\bibitem{leshno1993multilayer}
M.~Leshno, V.~Y. Lin, A.~Pinkus, and S.~Schocken, ``Multilayer feedforward
  networks with a nonpolynomial activation function can approximate any
  function,'' \emph{Neural {Networks}}, vol.~6, no.~6, pp. 861--867, 1993.

\bibitem{goodfellow2016deep}
I.~Goodfellow, Y.~Bengio, and A.~Courville, \emph{Deep {Learning}}.\hskip 1em
  plus 0.5em minus 0.4em\relax MIT Press, 2016.

\bibitem{dozat2016incorporating}
\BIBentryALTinterwordspacing
T.~Dozat, ``Incorporating nesterov momentum into adam,'' Standford University,
  Tech. Rep., 2016. [Online]. Available:
  \url{https://cs229.stanford.edu/proj2015/054_report.pdf}
\BIBentrySTDinterwordspacing

\bibitem{abdelnasser2015resource}
A.~Abdelnasser and E.~Hossain, ``Resource allocation for an {OFDMA} cloud-{RAN}
  of small cells underlaying a macrocell,'' \emph{IEEE Transactions on Mobile
  Computing}, vol.~15, no.~11, pp. 2837--2850, 2015.

\bibitem{7244171}
S.~Yang and L.~Hanzo, ``Fifty years of {MIMO} detection: The road to
  large-scale {MIMO}s,'' \emph{IEEE Communications Surveys Tutorials}, vol.~17,
  no.~4, pp. 1941--1988, 2015.

\bibitem{7236924}
Z.~Ding, F.~Adachi, and H.~V. Poor, ``The application of {MIMO} to
  non-orthogonal multiple access,'' \emph{IEEE Transactions on Wireless
  Communications}, vol.~15, no.~1, pp. 537--552, 2016.

\bibitem{7433470}
Z.~Ding, R.~Schober, and H.~V. Poor, ``A general {MIMO} framework for {NOMA}
  downlink and uplink transmission based on signal alignment,'' \emph{IEEE
  Transactions on Wireless Communications}, vol.~15, no.~6, pp. 4438--4454,
  2016.

\bibitem{9093213}
H.~Huang, Y.~Yang, Z.~Ding, H.~Wang, H.~Sari, and F.~Adachi, ``Deep
  learning-based sum data rate and energy efficiency optimization for
  {MIMO-NOMA} systems,'' \emph{IEEE Transactions on Wireless Communications},
  vol.~19, no.~8, pp. 5373--5388, 2020.

\bibitem{9336290}
S.~Niu, Y.~Liu, J.~Wang, and H.~Song, ``A decade survey of transfer learning
  (2010–2020),'' \emph{IEEE Transactions on Artificial Intelligence}, vol.~1,
  no.~2, pp. 151--166, 2020.

\bibitem{boyd2004convex}
S.~Boyd and L.~Vandenberghe, \emph{Convex optimization}.\hskip 1em plus 0.5em
  minus 0.4em\relax Cambridge university press, 2004.

\bibitem{ituProp}
ITU-R, ``Propagation data and prediction models for the planning of indoor
  radiocommunication systems and radio local area networks in the frequency
  range 900 {MHz} to 100 {GHz},'' 1997.

\bibitem{chollet2018keras}
F.~Chollet \emph{et~al.}, ``Keras: The python deep learning library,''
  \emph{Astrophysics Source Code Library}, 2018.

\bibitem{abadi2016tensorflow}
M.~Abadi \emph{et~al.}, ``Tensorflow: A system for large-scale machine
  learning,'' in \emph{$12^{th}$ USENIX Symposium on Operating Systems Design
  and Implementation ({OSDI})}, Savannah, GA, USA, Nov. 2016, pp. 265--283.

\bibitem{jain1984quantitative}
R.~Jain, D.-M. Chiu, and W.~Hawe, ``A quantitative measure of fairness and
  discrimination for resource allocation in shared computer systems,'' Digital
  Equipment Corporation, DEC-TR-301, Tech. Rep., 1984.

\end{thebibliography}

\begin{IEEEbiography}[{\includegraphics[width=1in,height=1.25in,clip,keepaspectratio]{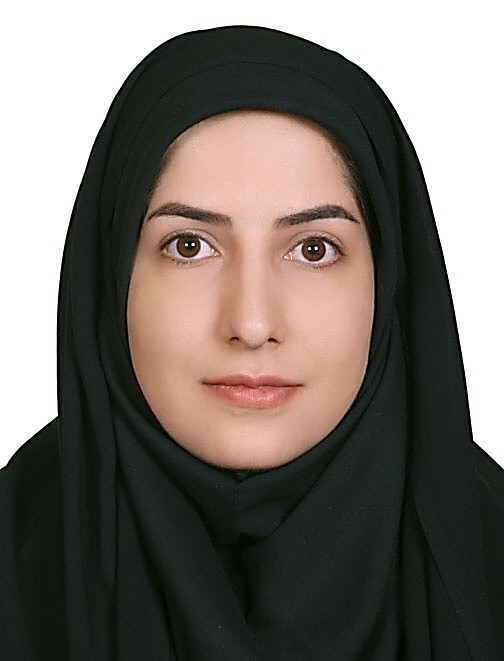}}]{Elaheh Vaezpour} received the Ph.D., M.Sc., and B.Sc. degrees in computer engineering from Amirkabir University of Technology (Tehran Polytechnic) in 2017, 2011, and 2008, respectively. She was a Visiting Scholar with the University of California, Irvine (UCI), USA and also the Newcastle University, Newcastle, Australia in 2016 and 2007, respectively. She is currently an Assistant Professor with Iran Telecommunication Research Center, Tehran, Iran. Her current research interests include radio resource allocation in wireless networks and applications of optimization theory and machine learning in system designs.
\end{IEEEbiography}

\begin{IEEEbiography}[{\includegraphics[width=1in,height=1.25in,clip,keepaspectratio]{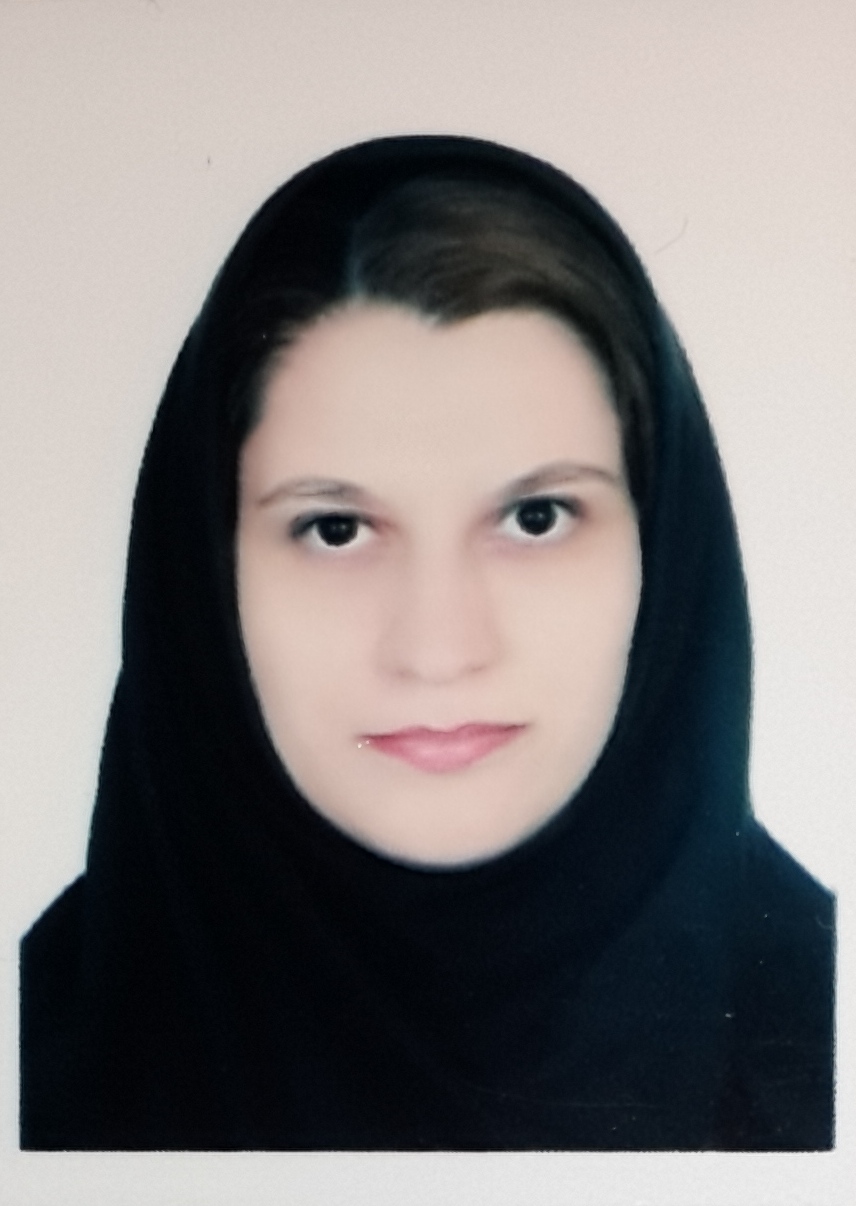}}]{Layla Majzoobi} received the B.S and M.S. degrees in electrical engineering from the Amirkabir University of Technology, Tehran, Iran in 2006 and 2009, respectively, and the Ph.D degree from the University of Tehran, Tehran, Iran, in 2019. She has been Research Assistant at the Iran Telecommunication Research Center, Tehran, Iran, from 2018. Her current research interests include large scale distributed optimization, machine learning, and 5G and 6G networks.
\end{IEEEbiography}

\begin{IEEEbiography}[{\includegraphics[width=1in,height=1.25in,clip,keepaspectratio]{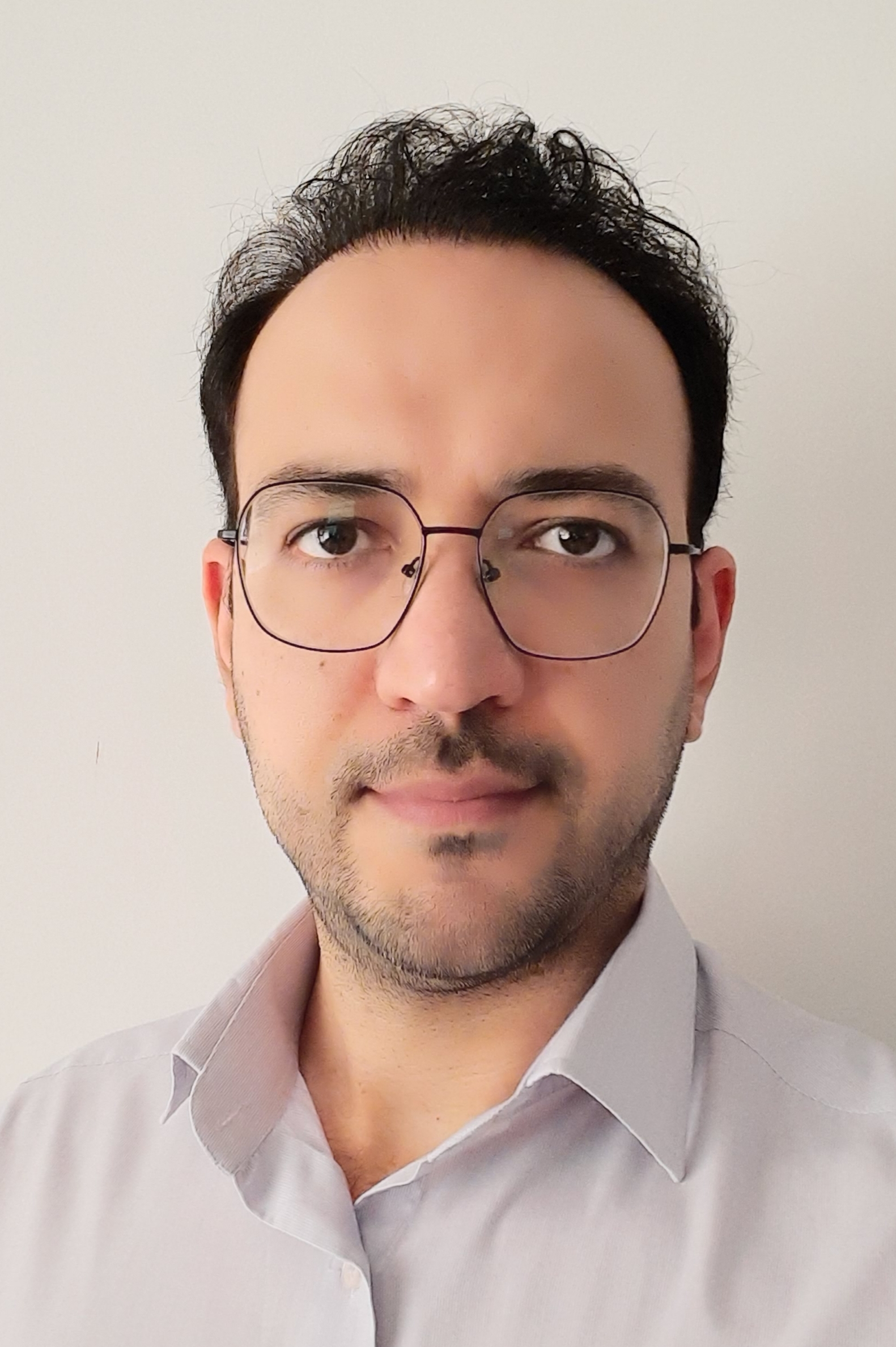}}]{Mohammad Akbari} received his B.Sc. in Electrical Engineering in 2008 from Tabriz University, Tabriz, Iran and the M.Sc. and Ph.D. degrees both from Iran University of Science and Technology (IUST), Tehran, Iran in 2010 and 2016 respectively. During 2010-2017, he was a senior system designer at Afratab R\&D group, Tehran, Iran. In 2017, he joined as a research assistant professor to the Department of Communication Technology, ICT Research Institute (ITRC), Tehran, Iran. His current research interests span topics in telecommunication system and networks including Self-Organizing Networks, 5G and 6G Networks and application of Machine Learning techniques in wireless communication.
\end{IEEEbiography}

\begin{IEEEbiography}[{\includegraphics[width=1in,height=1.25in,clip,keepaspectratio]{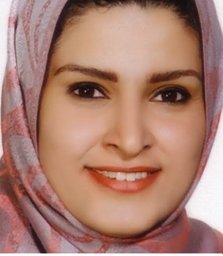}}]{Saeedeh Parsaeefard} (IEEE Senior Member) is currently a research scientist and visiting faculty member in University of Toronto. Her research has a special focus on applying optimization theory, game theory, and machine learning tools for better understanding and analyzing interactions of heterogonous, non-cooperative, distributed multi-agent systems in uncertain environments; and in particular wireless networks (5G and 6G), and IoT, IIoT, and URRLC applications. She has more than 50 journal and conference papers all in the best venue of IEEE transactions and comsoc flag conferences. Her research has been cited more than 1000 times with h-index of 16 and i10-index of 27. She is a co-author of two books in wireless network virtualization and robust resource allocation in future wireless networks, published by Springer. She received the Ph.D. degree in electrical and computer engineering from Tarbiat Modares University in 2012. From November 2010 to October 2011, she was a visiting Ph.D. student with the University of California, Los Angeles, Los Angeles, CA, USA. She was a Postdoctoral Research Fellow with the Department of Electrical and Computer Engineering, McGill University, Montreal, QC, Canada from 2013 to 2015. She received the IEEE Senior Membership and IEEE Women in Engineering awards (in region 8) in 2018.
\end{IEEEbiography}

\begin{IEEEbiography}[{\includegraphics[width=1in,height=1.25in,clip,keepaspectratio]{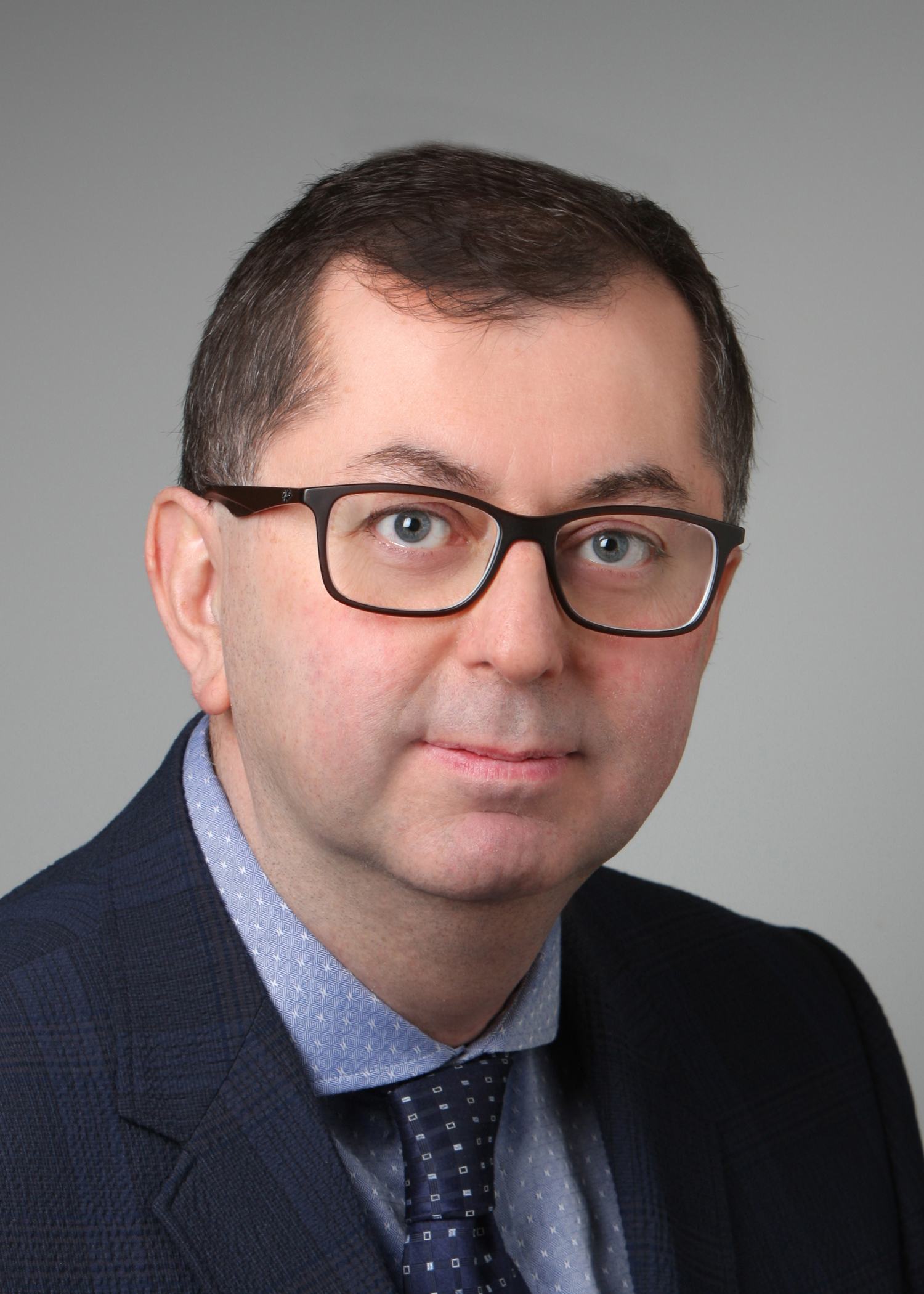}}]{Dr. Halim Yanikomeroglu} is a Professor in the Department of Systems and Computer Engineering at Carleton University, Ottawa, Canada. His primary research domain is wireless communications and networks. His research group has made contributions to 4G and 5G wireless technologies. In recent years, his work has focused on non-terrestrial networks including UAVs as users and base stations, high altitude platform stations, and dense satellite networks. His collaboration with industry has resulted in 39 granted patents. He is a Fellow of IEEE, EIC (Engineering Institute of Canada), and CAE (Canadian Academy of Engineering), and a Distinguished Speaker for both IEEE Communications Society and IEEE Vehicular Technology Society. He is currently serving as the Chair of the IEEE WCNC (Wireless Communications and Networking Conference) Steering Committee.
\end{IEEEbiography}

\end{document}